\newcommand{\indep}{\rotatebox[origin=c]{90}{$\models$}}
\newcommand{\textp}{\textit{p}\ }
\newcommand{\calN}{\mathcal{N}}
\newcommand{\calP}{\mathcal{P}}
\title{A Bayesian decision support system for counteracting activities of terrorist groups}
\author[1\authfn{1}]{Aditi Shenvi}
\author[3\authfn{1}]{F. Oliver Bunnin}
\author[1,2]{Jim Q. Smith}
\affil[1]{Department of Statistics, University of Warwick, Coventry, UK}
\affil[2]{The Alan Turing Institute, London, UK}
\affil[3]{Natwest Markets, London, UK}
\runningauthor{Shenvi et al.}
\begin{document}

\maketitle

\begin{abstract}
Activities of terrorist groups present a serious threat to the security and well-being of the general public. Counterterrorism authorities aim to identify and frustrate the plans of terrorist groups before they are put into action. Whilst the activities of terrorist groups are likely to be hidden and disguised, the members of such groups need to communicate and coordinate to organise their activities. Such observable behaviour and communications data can be utilised by the authorities to estimate the threat posed by a terrorist group. However, to be credible, any such statistical model needs to fold in the level of threat posed by each member of the group. Unlike in other benign forms of social networks, considering the members of terrorist groups as exchangeable gives an incomplete picture of the combined capacity of the group to do harm. Here we develop a Bayesian integrating decision support system that can bring together information relating to each of the members of a terrorist group as well as the combined activities of the group. 
\keywords{Bayesian hierarchical models, Chain event graphs, Dynamic weighted network models, Graphical model, Integrating decision support systems, Multiregression dynamic models, Terrorism}
\end{abstract}


\section{Introduction} \label{sec:introduction}


Terrorism persists all over the world. Attacks perpetrated by terrorist groups, both large and small, have the potential to endanger lives and cause serious damage to the security and well-being of the general public. \citet{OHCHR32} states ``Terrorism clearly has a very real and direct impact on human rights, with devastating consequences for the enjoyment of the right to life, liberty and physical integrity of victims. In addition to these individual costs, terrorism can destabilize Governments, undermine civil society, jeopardize peace and security, and threaten social and economic development". Hence, developing effective counter-terrorism strategies are of high relevance for any Government. The UK's Strategy for Countering Terrorism, known as CONTEST, outlines its counter-terrorism framework built on the four `P' work strands described in \citet{contest} as follows: 
\begin{enumerate}[itemsep=0em, label=(\alph*)]
\item Prevent: to stop people becoming terrorists or supporting terrorism.
\item Pursue: to stop terrorist attacks.
\item Protect: to strengthen protection against a terrorist attack.
\item Prepare: to mitigate the impact of a terrorist attack.
\end{enumerate}

\noindent The objectives of the ``Pursue" work strand are to detect, understand, investigate, and disrupt terrorist attacks \citep{contest}. Counter-terrorism authorities can closely monitor the (pre-incident) activities\footnote{Throughout this paper, we use ``activities" to refer to preparatory or pre-incident activities of terrorists, and ``attack" to refer to a terrorist plot before or after execution.} of suspected terrorist groups -- once appropriate legal permissions have been secured\footnote{The Regulation of Investigatory Powers Act 2000 governs the methods of information gathering. A warrant to gather information requires permission from the Secretary of State and an independent senior judge. They approve the warrant only when they are convinced that it is necessary and proportionate \citep{ripa, chorley_2021}} -- in an attempt to frustrate any potential attacks for which they might have been preparing. Whilst groups intent on terrorism may very likely attempt to hide or disguise their intentions and activities to avoid detection and scrutiny, they still need to perform certain preparatory tasks and to communicate with each other in order to plan, organise and execute a joint terrorist attack. For instance, in the aftermath of the fall of Kabul to the Tabilan and prior to the suicide bombing attack at Kabul's Hamid Karzai International Airport, there was credible intelligence of increased threat from a terrorist attack at the airport during evacuation efforts by foreign governments \citep{Afgh_aljazeera, Afgh_reuters}. Similarly, there were several warning signs of a possible attack prior to the September 11, 2001 terrorist attacks perpetrated by al-Qaeda in the United States. These signs include intercepted phone calls in Yemen detailing plans for an al-Qaeda summit in Malaysia in December 2009, Zacarias Moussaoui (a member of al-Qaeda) enrolling in flight training in Minnesota, and arrival of key al-Qaeda members in the United States from January 2000 through April 2001 \citep{911}. 

Governments worldwide spend a considerable portion of their funds on counter-terrorism efforts. In the UK, the government committed to a 30\% increase in counter-terrorism spending from \pounds 11.7 billion to \pounds 15.1 billion from 2015 to 2020 \citep{funding_CT_UK}. In June 2021, the UK unveiled the first-phase of a new state-of-the-art Counter Terrorism Operations Centre (CTOC) \citep{CTOC}. The United States of America, on the other hand, spent \$2.8 trillion during fiscal years 2002 through 2017 on homeland security, international security programs, and the wars on terrorism in Afghanistan, Iraq, and Syria \citep{funding_CT_USA}. Despite the staggering amounts of money invested into counter-terrorism efforts, the human, time, and material resources available for these purposes are, of course, not unlimited. Every week hundreds of new leads arise in the form of partial and fragmentary information \citep{Anderson2016}. MI5 director Ken McCallum stated in a recent interview \citep{chorley_2021} that ``difficult decisions [have to be made] based on fragmentary information". Hence, counter-terrorism authorities must decide how to use the limited resources available to them to prioritise and de-prioritise cases according to the incoming noisy leads. Additionally, there are some individuals about whom authorities are well-informed while there are others whose behaviours and intentions remain largely unknown. 

Statistical models of observable terrorist behaviours and activities (for example, \citet{raghavan2013hidden, gruenewald2019suspicious, allanach2004detecting, singh2004stochastic}) typically fail to take into account the different roles played by the members within the group or the interactions between them. On the other hand, statistical techniques from social network analysis (SNA) when applied to terrorist networks (see review in Section \ref{subsec:related_work}), take into account the interactions of the members of the group but typically also treat the individuals as exchangeable. Each individual member of the group has their own role to play in the activities and overall mission of the group \citep{roles}. However, although it might seem ideal in this context for a model to explicitly accommodate the various roles within a terrorist group, these depend on the type of terrorist attack being planned and more importantly, on the complex dynamics of the group that are not trivial to uncover \citep{roles}. We note here that although there exist instances of role identification within terrorist groups (for example,  \citet{shaikh2007graph, qin2005analyzing}), these often rely on measures such as centrality within the terrorist network to identify roles rather than the activity data concerning these individuals. An exception is \citet{aitkin2017statistical} which uses activity data but, in its current form, cannot model the evolution of the terrorist network.

\citet{bunnin2019bayesian} presented a Bayesian hierarchical model for the progression of a lone terrorist suspect's threat state (e.g. the threat states could be defined as `Neutral/Non-threatening', `Mobilised', `Preparing', `Training' and `Active/Threatening')  based on streaming data, hereafter called the \textit{Lone Terrorist} model. This model uses incoming patchy and noisy data to infer which preparatory tasks the suspect might be engaging in, and thereby provides a probabilistic estimate of which threat state they currently occupy. To study terrorist groups, however, it is insufficient to perform a multivariate extension of the Lone Terrorist model as doing so would treat the group as a collective unit, only considering the preparatory tasks and the threat state that the group collectively engages in and occupies. Such a model would not take into account the interactions between members or the different roles played by them. To incorporate these essential aspects, we demonstrate how the Lone Terrorist model can be combined with a statistical network model.   

In this paper, we present a Bayesian integrating decision support system (IDSS) designed to aid in monitoring the threat presented by known or suspected terrorist groups. An IDSS is a computer-based statistical tool that supports complex decision-making by enabling decision-makers to explore the implications of the various options available to them \citep{leonelli2015bayesian, barons2020decision} (e.g. the effect on a terrorist group's preparedness by convincing one of its member to leave or act against the group \citep{chorley_2021}). It does so by combining together different models that integrate expert knowledge and judgement to enable reasoning about distinct aspects of a complex system, such as the threat posed by a terrorist group. The novel form of IDSS we propose here combines together two main components (1) a Lone Terrorist model for each member of a known or suspected terrorist group, and (2) a dynamic weighted \textit{Terrorist Network} that models the pairwise interactions between the members of the group. The Lone Terrorist model acts as a proxy for the role of each member within the group by modelling the threat state occupied by that member assuming that they were carrying out a terrorist attack by themselves. Such an IDSS can then be used create informative indicators of the imminence of an attack by a terrorist group. 

Section \ref{sec:observable_data} describes the observable data on the behaviours and activities of the members of a group that counter-terrorism authorities might have access to after acquiring the necessary legal permissions. In Section \ref{sec:background}, the Lone Terrorist model and the IDSS framework are briefly reviewed. Section \ref{sec:dynamic_network} presents the Terrorist Network which models the pairwise interactions between members of a terrorist group and also includes a review of relevant SNA methodologies applied to terrorist networks. Section \ref{sec:idss_for_terrorism} combines together the two components of the proposed IDSS and describes how this IDSS could be used to monitor measures which act as indicators of terrorist attacks. Section \ref{sec:analysis} presents a proof of concept application of the proposed IDSS. We conclude the paper with a discussion of our contributions and the avenues for future research leading from our work in Section \ref{sec:discussion}. 


\section{Observable Data} \label{sec:observable_data}



The Lone Terrorist and Terrorist Network models are designed to be informed by routinely and covertly observable data on suspected terrorists that counter-terrorism authorities are able to access after acquiring the required legal permissions, and only when necessary and proportionate (see Section \ref{sec:introduction}). According to \citet{ripa2}, ``Surveillance, for the purposes of RIPA [The Regulation of Investigatory Powers Act 2000 \citep{ripa}], includes monitoring, observing or listening to persons, their movements, conversations or other activities and communications." Thus, counter-terrorism authorities may receive information from various sources \citep{chorley_2021}. Example sources include monitoring of physical meetings, interception of electronic communications, and intelligence obtained from other policing agencies, covert informants, or the public.

In particular, terrorists working together within a group need to communicate to coordinate their joint efforts. These communications give rise to observable data. There are at least five types of potentially knowable or observable data which indicate ties between suspected terrorists that can be obtained by the counter-terrorism authorities:
\begin{enumerate}[itemsep=0em,  label=(\alph*)]
\item Existing kinship or social links;
\item Work or other shared affiliations;
\item Bilateral electronic communications (e.g. telephone, email, Whatsapp etc);
\item Physical meetings (observed directly or through closed circuit television);
\item Financial transactions (e.g. bank transfers between accounts).
\end{enumerate}
The first two items are relatively static whereas the other three are more dynamic. Moreover, the first two do not necessarily indicate ties that are malicious in nature, but may enable a pre-existing tie that facilitates collaboration once other factors have come into play. Examples of such ties are the school and social ties that existed between several of the al-Qaeda September 11, 2001 terrorists \citep{Krebs2002}, the kinship tie between Saleem and Hashem Abedi -- the former being the suicide bomber of the May 22, 2017 Manchester Arena bombing and the latter his brother who was found guilty of aiding Saleem \citep{guardian_17mar2020} -- and the community ties surveyed by the US army in Thai villages in 1965 \citep{Meter2002}. 

Note that it is important to differentiate two types of data associated with communications: the content of such communications and ``secondary data", i.e. metadata such as the identities of parties and the timing, location and duration of communications. Often secondary data is available whilst content data is unavailable due to either encryption or limits prescribed by certain interception warrants. Moreover due to technology companies' planned future adoption of encryption for a wider range of communication technologies, the availability to investigators of content data is likely to decrease \citep{watney2020law, chorley_2021}. However, even secondary data without content data has proven to be extremely useful: ``so-called \textit{secondary data} can enable the tracing of contacts, associations, habits and preferences" \citep{Anderson2016}. We assume at a minimum some availability of secondary data. 


\section{Background: Two Components, One Decision Support System} \label{sec:background}



\subsection{Modelling Activities of Lone Terrorists} \label{subsec:lone_terrorist}

The Lone Terrorist model \citep{bunnin2019bayesian} is a three-level hierarchical Bayesian model developed to support the counter-terrorism authorities in their pursuit of terrorists and violent criminals acting alone to commit a violent crime against the general public. Since the counter-terrorism authorities get more leads every week than they can pursue, we define an open population of all suspects or persons of interest (POIs) at time $t$ as $\calP_t^*$ and the subset of $\calP_t^*$ that the authorities have decided to investigate and monitor at time $t$ as $\calP_t$. A bottom-up description of the three levels of the Lone Terrorist model for a suspect \textp $\in \calP_t$ is given below.

\subparagraph{Bottom Level} This level consists of a discrete time graphical model (more precisely, a reduced dynamic chain event graph, see references in \citet{bunnin2019bayesian}). This graphical model can be customised to a specific type of terrorist attack (e.g. a knife-attack). The states (also called here as ``threat states") of the model, which form the vertices of its associated graph, represent the possible paths of progression for the modelled terrorist attack. \citet{smithassault2018} provides several types of categorisations for a wide range of criminal behaviours which can be used to inform the threat states of the graphical model. Alternatively, these states can be more generically defined (e.g. `Mobilised', `Preparing', `Training' and `Active/Threatening'). In both cases, the model also includes a ``Neutral/Non-threatening" state which is an absorbing state representing that the suspect no longer presents a threat to the general public within the jurisdiction of the counter-terrorism authority. Denote by $X_t$ the latent random variable indicating the threat state occupied by a suspect \textp at time $t \geq 0$. The sample space of $X_t$ is given by the vertices $\{x_0, x_1, \ldots, x_n\}$ of the graph. Let $\pmb{\pi}_{t} = \{\pi_{t0}, \pi_{t1}, \ldots, \pi_{tn}\}$ where $\pi_{ti}$ indicates the probability of the suspect being in threat state $x_i$ at time $t$ for $i \in \{0, 1, \ldots, n\}$. 

\subparagraph{Intermediate Level} At this level, we define a collection of $R$ tasks associated with the threat states of the graphical model. At any time $t \geq 0$, denote the task vector by $\pmb{\vartheta}_t = \{\vartheta_{t1}, \vartheta_{t2}, \ldots, \vartheta_{tR}\}$ where each $\vartheta_{tj}$ is an indicator variable such that $\vartheta_{tj} = 1$ if \textp is enacting task $j$ at time $t$, for $j \in \{ 1, 2, \ldots, R\}$. Each task can be associated with one or more threat states of the graphical model. The purpose of the task vector is to enable the counter-terrorism authorities to estimate how far along the suspect is in their progression towards a specified or unspecified terrorist attack. 

\subparagraph{Surface Level} This level consists of the data $\{\textbf{Y}_t\}_{t \geq 0}$ relating to activities of the suspect \textp. This data may range from complete and reliable intelligence to partial and patchy secondary data. However, even noisy signals obtained from partial data can help the police to condition on the limited information under the Bayesian Lone Terrorist model and revise their judgements accordingly. For each task $\vartheta_{tj}$ in the intermediate level, we can associate a subset $Y_{tj} \subseteq \textbf{Y}_t$ of the data stream observed which informs whether \textp is engaged in task $\vartheta_{tj}$, for $j \in \{1,2, \ldots, R\}$ at time $t$. If the data is noisy, a \textit{filter function}\footnote{A filter function is simply a suitable function $\tau_j(\cdot)$ of the data $Y_{tj}$; see \citet{bunnin2019bayesian} for further details on suitable filter functions.} may be used to obtain some viable signal $Z_{tj}$ from the noisy data subset $Y_{tj}$. Denote by $\{\textbf{Z}_t\} = (Z_{t1}, Z_{t2}, \ldots, Z_{tR})$. 

The recurrences associated with the progressions in the Lone Terrorist model are described in Appendix A. At each time $t \geq 0$ the model provides as output the posterior probability vector $\pmb{\pi}_{t}$ associated with the suspect occupying one of threat states in the underlying model.  

\begin{example}(A hypothesised Lone Terrorist model)
A general template for a lone terrorist attack can be represented by the graph in Figure \ref{fig:rdceg}. The threat states of this model are represented by the vertices of this graph and the edges represent the possible transitions between the threat states. An example of the threat states, tasks and observable data for a gun attack are shown in Table \ref{tab:statespace}. 

\begin{figure}[h]
    \centering
    \includegraphics[scale = 0.4, trim = {0cm, 1cm, 0cm, 0.75cm}]{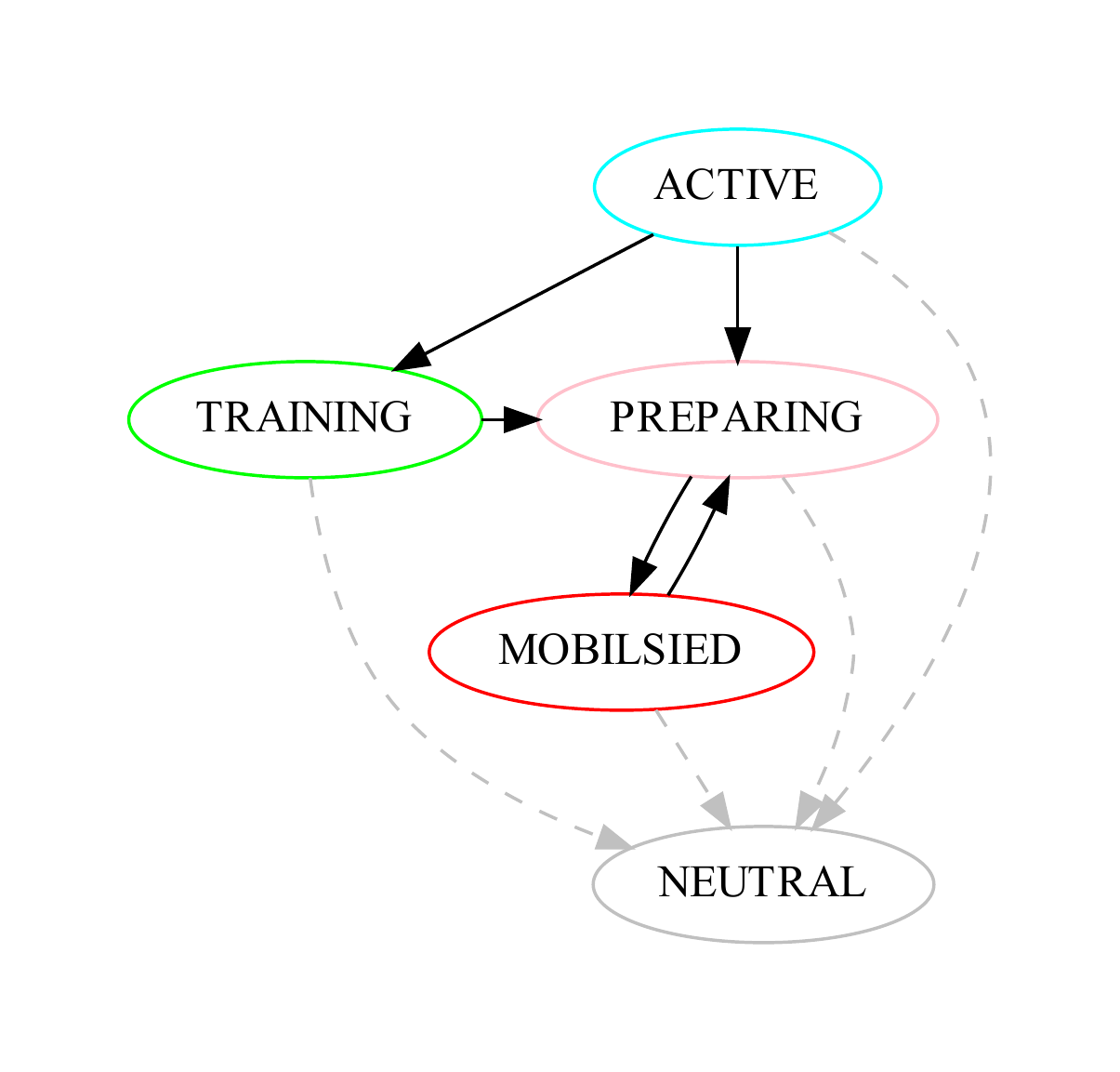}
    \caption{A graph for the Lone Terrorist model showing a generic set of threat states.}
    \label{fig:rdceg}
\end{figure}
\end{example}

\vspace{-0.65cm}
\begin{table}[h]
\caption{Example state space, task vector and observable data.}
\label{tab:statespace}
\begin{threeparttable}
\begin{tabular}{lll}
\headrow
\thead{Threat States} & \thead{Tasks} & \thead{Observable Data}\\
Active &    Engage with radicalisers &     Physically met with radicals \\
Training &  Make personal threats &   Personal threats made   \\
Preparing  & Learn to drive  &   Obtained driving license    \\
Mobilised &  Obtain vehicle  & Rented car \\
Neutral  & Engage in public threats   &  Public threats made on social media  \\
&   Obtain Financial Resources &   Sold assets \\
& Learn how to use a gun & Visited shooting ranges\\
& Acquire a gun &  Been seen with a gun\\
& Acquire ammunition & Met with gun \& ammunition dealer \\
& Reconnoitre targets &  Visits to target location made \\ 
& & Increase in finances\\
& & Reduced contact with family\\
& & Meetings with trained radicals\\
\hline
\end{tabular}
\end{threeparttable}
\end{table}

\vspace{-0.24cm}

\subsection{Integrating Decision Support Systems}
\label{subsec:IDSS}


An IDSS is a Bayesian unifying and integrating framework that combines component decision support systems -- each supporting decision-making about a distinct aspect of a complex system -- into a single entity (see \citet{leonelli2015bayesian, smith2015coherent, barons2020decision} and references therein). The transparent and statistically grounded framework of the IDSS enables a statistician to formally incorporate the judgements and uncertainties of the domain experts and decision-makers into the IDSS. Any available data is then fed through the relevant components of the IDSS with full consideration of these judgements and uncertainties. The outputs of all the components are then combined -- in a manner that is appropriate for the application -- to enable the decision-makers to fully evaluate the effects of any potential policies on their outcomes of interest. Thus, the IDSS systematically utilises all the relevant evidence to facilitate informed decision-making. The IDSS framework is particularly useful for decision-making for complex and evolving systems where the decision-makers need to simultaneously consider the effect of potential policies on several different evolving variables which can influence the outcomes of interest. In the case of terrorism, the decision-makers are regularly faced with making very difficult choices between which cases to prioritise and which to de-prioritise \citep{chorley_2021, Anderson2016}. Further, when dealing with terrorist groups, they need to not only consider the dynamics of the group but also the motivations, skills and preparedness of its individual members. 

In most cases, each component of the IDSS is itself a complex system that can be modelled by any appropriate Bayesian statistical model. Moreover, the outputs from one component may feed into other components as inputs. Under the IDSS framework, the outputs of the distinct components can be formally combined using another suitable \textit{composite model}. This suitable composite model must satify certain sufficient conditions to ensure that the inference made from the IDSS is coherent\footnote{\citet{leonelli2015bayesian, smith2015coherent} describe the sufficient conditions leading to a coherent integrating system.}. \citet{leonelli2015bayesian} provide several examples of suitable composite models such as Bayesian networks, chain event graphs, Markov networks, multiregression dynamic models (MDMs) and influence diagrams. For our application, we repurpose the decoupling methodology introduced for MDMs \citep{QueenandSmith93} to formally combine together, at each time $t \geq 0$, the outputs of the Terrorist Network and the individual Lone Terrorist models for each individual \textp in the population $\calP_t$ whilst estimating the parameters of each model independently. Below we briefly review the technical details of this methodology.

\vspace{-0.15cm}

\subsubsection{The Decoupling Methodology}
\label{subsubsec:Decoupling}

MDMs are a family of graphical models customised for modelling multivariate time-series data. The decoupling methodology of the MDM involves linking the components of a multivariate time-series data through their observations directly rather than through their parameter vectors. This allows the MDM to decompose the multivariate problem into a collection of univariate dynamic linear models \citep{WestandHarrison}. MDMs have had success in modelling multivariate time-series data in several diverse applications, see for example \citet{costa2015searching, anacleto2013multivariate, WilkersonSmith, barons2020decision}. Below we review the key results from the MDM theory. 

Denote by $\textbf{Y}_t = \{Y_t(1), Y_t(2), \ldots, Y_t(n)\}$ a multivariate time-series composed of $n$ components at time ${t > 0}$. Let $\textbf{y}_t$ be the vector of observed values of all components of $\textbf{Y}_t$ and $y_t(i)$ be the observed value of component $Y_{t}(i)$. Further, let $\textbf{y}^t = \{\textbf{y}_1, \textbf{y}_2, \ldots, \textbf{y}_t\}$, $\textbf{y}^t_i = \{y_1(i), y_2(i), \ldots, y_t(i)\}$ and $\textbf{y}^t_A = \{\textbf{y}_1(A), \textbf{y}_2(A), \ldots, \textbf{y}_t(A)\}$ where $A \subseteq \{1,2, \ldots, n\}$. Denote the parameters associated with $\textbf{Y}_t$ by $\pmb{\theta}_t = \{\theta_t(1), \theta_t(2), \ldots, \theta_t(n)\}$ such that $\theta_t(i)$ is the parameter vector associated with component $Y_t(i)$. 


\begin{proposition}
For a dynamic model over a time series $\textbf{Y}_t = \{Y_t(1), Y_t(2), \ldots, Y_t(n)\}$ such that its conditional independence structure can be represented by a DAG whose vertices are the components of $\textbf{Y}_t$ and the prior parameter vectors denoted by $\pmb{\theta}_0$ are set to be mutually independent, then the following conditional independencies hold
\begin{align}
    \indep_{i \in [n]} \theta_t(i) \,&\mid\, \textbf{y}^{t} \label{ci:mdm1} \\
     \theta_t(i) \,\indep\,  \textbf{y}^t_{[n]\backslash \{i \cup Pa(i)\}} \,&\mid\, \textbf{y}^t_i, \textbf{y}^t_{Pa(i)} \label{ci:mdm2}
\end{align}
\noindent where $\indep$ stands for probabilistic independence, $|$ shows conditioning variables on the right, $[n] = \{1, 2, \ldots, n\}$, and $Pa(i)$ are the indices in $[n]$ associated with the parents of component $Y_t(i)$ in the DAG of the model.
\label{prop:mdm_CIs}
\end{proposition}

The conditional independence in Statement \ref{ci:mdm1} indicates that the parameter vectors for the different components remain independent for all time given the present and past observations, and the conditional independence in Statement \ref{ci:mdm2} states that given the present and past observations for component $Y_t(i)$ and its parent components in the DAG of the model, the parameter vector for component $Y_t(i)$ is independent of the rest of the observed data \citep{QueenandSmith93}. These conditional independencies ensure that the parameters associated with each component of the dynamic multivariate model can be updated independently at each time $t$ and remain independent thereafter at each future time $t' > t$. This conditional independence structure along with the specified DAG representation of the components of $\textbf{Y}_t$ enables an MDM to decompose $\textbf{Y}_t$ such that each of its components is a univariate DLM. The proofs for Statements \ref{ci:mdm1} and \ref{ci:mdm2} can be found in \cite{QueenandSmith93}. In particular, we note here that the validity of these statements does not rely on each of the components being decomposed into univariate DLMs. The proof in \citet{QueenandSmith93} is an induction that simply relies on the prior parameter vectors $\pmb{\theta}_0$ being mutually independent, and on an application of the d-separation theorem \citep{verma1990causal} on the DAG representation of the components of $\textbf{Y}_t$. Hence, the above decoupling methodology can be easily transferred to suitable non-MDM settings such as ours as described in Section \ref{sec:idss_for_terrorism}.


\section{Terrorist Network} \label{sec:dynamic_network}

The Terrorist Network is an undirected, dynamic, and weighted network model which, at each time $t \geq 0$ consists of the POIs $\calP_t$ whom the counter-terrorism authorities choose to pursue at time $t$ as the vertices and edges indicate known or potential ties between these POIs. Ties between the suspects are informed by observable data as described in Section \ref{sec:observable_data}. During each time period, new leads are discovered. From among these leads, new investigative cases are opened for those that pass a triage process meeting defined criteria\footnote{These criteria include: (i) Risk: is there any evidence of risk in intelligible form, (ii) Credibility: is the information reliable; (iii) Actionable: can anything actually be done about it; (iv) and Proportionality: is investigation of the lead necessary and proportionate within legal and statutory obligations, resources and priorities \citep{Anderson2016}.}. The triage process thus gives rise to a set of newly identified individuals $\calP_t^{+}$ at each time interval $t$. Over the same interval, a set $\calP_t^{-}$ are lost from $\calP_t$ for a variety of reasons such as death, arrest, evidence of innocence, physical movement to leave the jurisdiction of the authorities, or de-prioritisation based on evidence and case load. For simplicity, assume that $\calP_t^{+}$ join the set $\calP_t$ at the start of the time period $t$ and existing individuals $\calP_t^{-}$ are lost at the end of $t$. We then have that
\begin{equation}
\calP_t = \{\calP_{t-1} \backslash \calP_{t-1}^{-}\} \cup \calP_t^{+}.
\end{equation}
An undirected network $\mathcal{N}_{t} = (V(\calN_t), E(\calN_t))$ is then created at each time $t$ where $V(\calN_t) = \calP_t$ are the vertices and $E(\calN_t)$ are the edges of the network. An edge exists $e_{ij} \in E(\calN_t)$ between two individuals \textit{p}$_i$\ and \textit{p}$_j$\ if there is a tie between them. For example, if
\begin{enumerate}[label=(\alph*)]
    \itemsep0em
    \item They share an existing familial or social link;
    \item They have committed crimes together in the past;
    \item They have shared affiliations;
    \item Since becoming POIs, they have been observed communicating with each other.
\end{enumerate}

Once an edge is created in $\mathcal{N}_t$ between some \textit{p}$_i$, \textit{p}$_j$\ $\in \calP_t$, this edge endures for all $\mathcal{N}_{t'}$ where $t' \geq t$ as long as \textit{p}$_i$, \textit{p}$_j$\ $\in \calP_{t'}$. Denote by $\varphi_{ijt}$ the latent random variable measuring the pairwise communications shared between \textit{p}$_i$\ and \textit{p}$_j$\ at time $t$. Thus $\varphi_{ijt}$ acts as a quantitative measure of the information directly exchanged between \textit{p}$_i$\ and \textit{p}$_j$ and models the edge weight on the edge $e_{ij}$ in $\mathcal{N}_t$. Denote by $\Phi_t$ a $\abs{\calP_t} \times \abs{\calP_t}$ symmetric matrix with its $(i,j)$th entry given by $\varphi_{ijt}$. By convention, we set $\varphi_{ijt} = 0$ if $i = j$ or $e_{ij} \not \in E(\calN_t)$ for $i \neq j$. The observable pairwise communications data (see Section \ref{sec:observable_data}) are used to estimate $\varphi_{ijt}$. Note that the granularity of the time steps, for example hourly, daily or weekly, is chosen to suit the observation process.

\begin{example}(A simple hypothesised terrorist group)
\label{ex:criminal_example}
Consider here the investigative activities of authorities in a particular hypothetical town in the UK. The time steps here are assumed to be weekly. Four individuals from this town, namely \textit{p}$_1$,\ \textit{p}$_2$,\ \textit{p}$_3$ and \textit{p}$_4$ have been observed to have posted pro-terrorist material on social media and have been triaged into $\calP_t$, the observed subpopulation at time $t$. Also at time $t$, a separate lead reveals the return of an individual \textit{p}$_5$ -- who was known to previously have pro-terrorist ideas -- from a country whose local terrorist groups are known to run large radicalisation campaigns targeting foreign individuals. Thus, the subpopulation of suspects under investigation at time $t$ is given by $\calP_t =\{$\textit{p}$_1$,\ \textit{p}$_2$,\ \textit{p}$_3$,\ \textit{p}$_4$,\ \textit{p}$_5 \}$. The preliminary investigation revealed that \textit{p}$_1$\ and \textit{p}$_2$\ attended the same secondary school and are the same age, and that \textit{p}$_2$\ and \textit{p}$_3$\ attend the same gym and are frequently seen together. Further, \textit{p}$_4$\ and \textit{p}$_5$\ were both known affiliates of, a now defunct, local criminal group, and \textit{p}$_1$\ and \textit{p}$_5$\ were arrested together for a minor offence in the past. Due to these pre-existing links, edges $e_{1,2}$, $e_{2,3}$, $e_{4,5}$ and $e_{1,5}$ can be created at time $t$, see Figure \ref{fig:example1}(a). In the duration of the week represented by time $t$, \textit{p}$_5$ has been arrested and extradited to another country, with which the UK shares an extradition agreement, on a serious accusation of kidnapping and murder. Thus, \textit{p}$_5$ is no longer a POI to these authorities, and hence, $\calP_{t+1} =\{$\textit{p}$_1$,\ \textit{p}$_2$,\ \textit{p}$_3$,\ \textit{p}$_4\}$. The network at time $t+1$ is represented in Figure \ref{fig:example1}(b). The authorities continue their monitoring activities on these individuals through the weeks represented by time $t+1$ and $t+2$ with no changes to the structure of the network. At time $t+2$ it is discovered that mobile phones newly registered to the addresses of  \textit{p}$_1$ and \textit{p}$_4$ are in communication. This creates a tie between \textit{p}$_1$\ and \textit{p}$_4$\ as shown in Figure \ref{fig:example1}(c). 

\begin{figure*}[ht]  
\centering
\begin{subfigure}{0.30\textwidth}
\includegraphics[scale = 0.35, trim ={1cm, 0, 0, 0}]{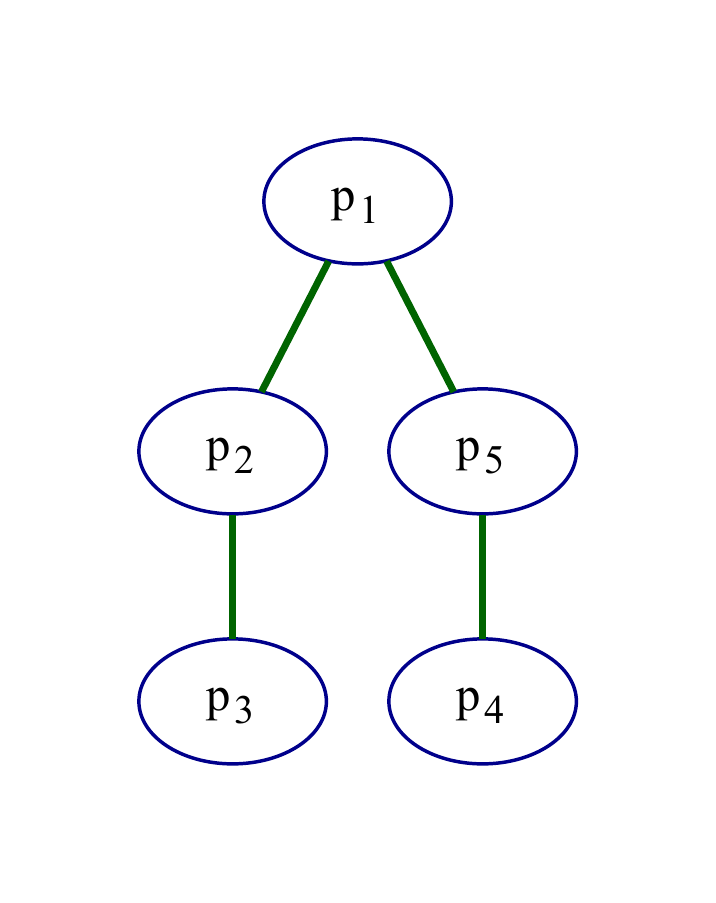}
\caption{At time $t$}
\label{fig:ex1(a)}
\end{subfigure}
\begin{subfigure}{0.30\textwidth}
\includegraphics[scale = 0.35, trim ={0.8cm, 0, 0, 0}]{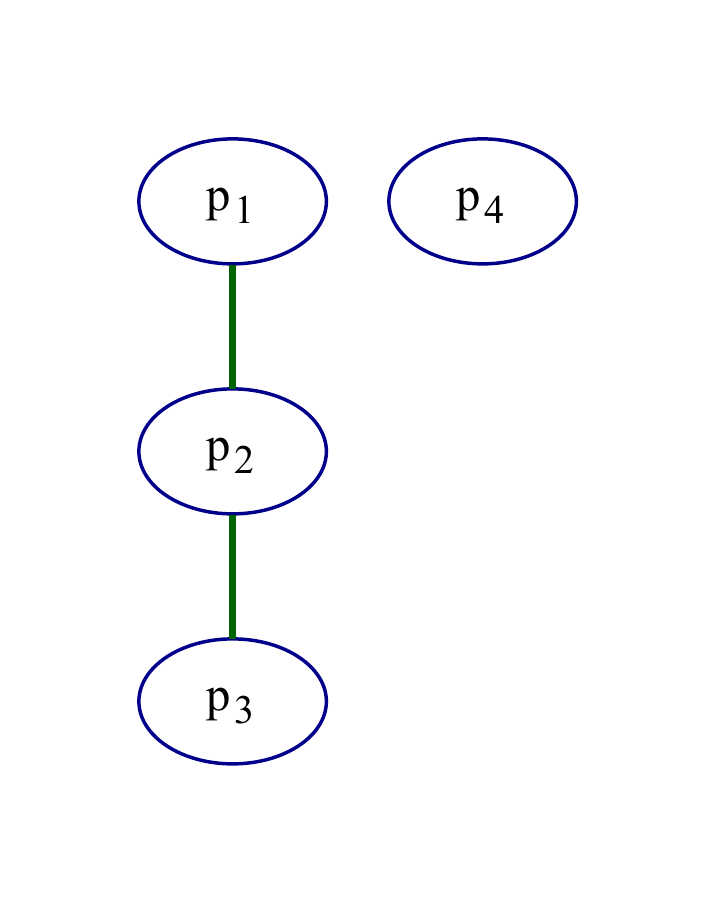}
\caption{At time $t+1$ and $t+2$}
\label{fig:ex1(b)}
\end{subfigure}
\begin{subfigure}{0.30\textwidth}
\includegraphics[scale = 0.35, trim ={1cm, 0, 0, 0}]{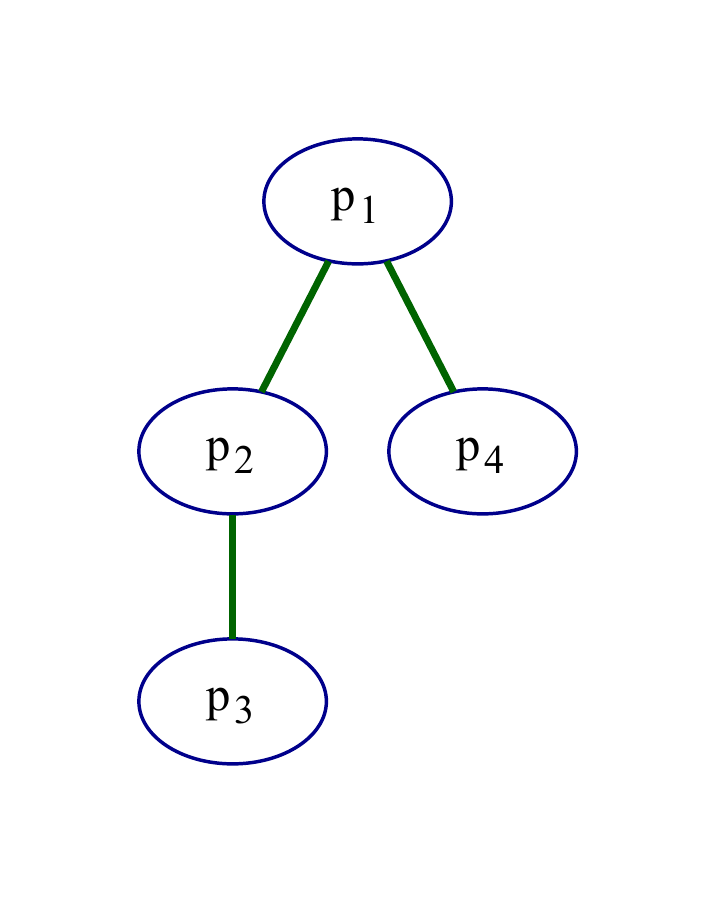}
\caption{At time $t+3$}
\label{fig:ex1(d)}
\end{subfigure}

\caption{Structure of network at times $t$, $t+1$, $t+2$ and $t+3$.} 
\label{fig:example1}
\end{figure*}

\end{example}

Recall that the counter-terrorism authorities are likely to receive data and information from multiple sources. Suppose that there are $K$ such \textit{information channels}. The data from each channel is condensed into a summary measure in the Terrorist Network. The summary measure used for each channel depends on factors such as the type of data, the data source, the frequency of the observations, and the required granularity of that specific type of data. For instance, for an information channel informing the duration of phone calls or number of text messages exchanged between a pair of suspects, the summary measure may simply be the sum of the observations, whereas for the amount of money exchanged between the suspects, a first-order difference in the observations might be a suitable summary measure. However, note that these summary measures for the different information channels may be on very different scales of measurement, e.g. $x$ hours of a phone call and \pounds $x$ of money exchanged, and hence, might have an disproportionate effect on the edge weight variable $\varphi_{ijt}$. To balance the effect of data relating to different channels on $\varphi_{ijt}$, the data obtained through the different channels must be on a comparable scale. This can be achieved through any of the standard methods of scaling or normalisation (see e.g. \citet{jahan2015state}).

Denote by $s_{ijkt}$ the scaled or normalised summary measure of the data observed between the pair \textit{p}$_i$, \textit{p}$_j$\  $\in \calP_t$ from channel $k$ at time $t$. We assume that the following independence relationship holds:
\begin{equation}
    \indep_{k \in \{1,\ldots,K\}} \, s_{ijkt}
    \label{eq:info_independence}
\end{equation}
\noindent which implies that the data and information obtained from the different information channels for a given pair $\{\text{\textit{p}$_i$, \textit{p}$_j$}\} \in \calP_t \times \calP_t$ at time $t$ are mutually independent. Denote by $S_t$ the \textit{observations matrix} at time $t$ with elements $\textbf{s}_{ijt}$ such that $\textbf{s}_{ijt} = \{s_{ij1t}, \ldots, s_{ijKt}\}$. Notice that $S_t$ is a symmetric $\abs{\calP_t} \times \abs{\calP_t}$ matrix with $\textbf{s}_{ijt} = \textbf{s}_{jit}$ due to the nature of the pairwise communications data. We use the convention that $\textbf{s}_{ijt}$ is a $K$-dimensional zero vector whenever $i = j$, $e_{ij} \not \in E(\calN_t)$ for $i \neq j$, and whenever no information is observed between two individuals. To indicate the difference in the quality or reliability of data obtained from the different channels, we define a parameter $\xi_k \in (0,1]$ which denotes the \textit{efficiency} of the intelligence obtained from channel $k$, for $k = 1, \ldots, K$. This efficiency parameter indicates the loss of information expected from a specific information channel. A value closer to 1 represents minimal loss of information (e.g. bank transitions data), whereas a value closer to 0 indicates that the actual observations are likely to be much higher than what has been conveyed to the authorities (e.g. patchy or poor source of secondary data).

\begin{example}(A simple hypothesised terrorist group (continued))
Consider pairs $\{\text{\textit{p}}_1, \text{\textit{p}}_2\}$ and $\{\text{\textit{p}}_2, \text{\textit{p}}_3\}$. Suppose that, for each pair, the authorities are able to observe the number of hours of telephone conversations (channel $k = 1$), and the number of text messages exchanged (channel $k = 2$). Further, suppose that for the pair $\{\text{\textit{p}}_2, \text{\textit{p}}_3\}$, after acquiring the required permissions, the authorities also have access to the amount of money exchanged between the pair through bank transactions (channel $k = 3$). Whilst phone records are known to be imperfect, the bank transactions data is considered to be very reliable. Thus, the efficiency parameters are set as $\xi_1 = 0.8, \xi_2 = 0.8$ and $\xi_3 = 1$. 

The data observed from the different information channels are on very different scales. Hence, the authorities decide to scale them to be between 0-10 to balance their effect on $\varphi_{ijt}$. The scaling might be performed as follows
\begin{equation*}
    s_{ijkt} = \frac{r_{ijkt}}{r_{max_k}} \times 10,
\end{equation*}
where $r_{ijkt}$ is the raw data associated with channel $k$ at time $t$ for the pair $\{\text{\textit{p}}_i, \text{\textit{p}}_j\}$ and $r_{max_k}$ is the maximum value expected to be observed for information coming through channel $k$. Typically, $r_{ijkt} \leq r_{max_k}$. If $r_{ijkt} > r_{max_k}$, then either $s_{ijkt} > 10$ may be used in the updating for $\varphi_{ijt}$, or $s_{ijkt}$ can be set to $10$ for all $r_{ijkt} > r_{max_k}$ when the effect of increases beyond $r_{max_k}$ are not considered to have substantial marginal effects. Table \ref{tab:multiple_channels} shows the scaled and raw values of the information observed from the different information channels for both pairs over times $t_0, t_1$ and $t_2$. Here, $r_{max_k}$ is set to 35, 1400 and 70000 for $k = 1, 2, 3$ respectively. 

\begin{table}[bt]
\caption{An example of scaled (raw) summary measures of the data observed between pairs $\{\text{\textit{p}}_1, \text{\textit{p}}_2\}$ and $\{\text{\textit{p}}_2, \text{\textit{p}}_3\}$. Here $s_{\cdot, \cdot, k, \cdot}$ refers to data related to telephone conversations for $k = 1$, text messages exchanged for $k = 2$, and money exchanged through bank transfers for $k = 3$.}
\label{tab:multiple_channels}
\begin{threeparttable}
\begin{tabular}{l|ll|ll|cl}
\headrow
\thead{Time} & \thead{$s_{1,2,1, \cdot}$} & \thead{$s_{2,3,1, \cdot}$} & \thead{$s_{1,2,2, \cdot}$}
& \thead{$s_{2,3,2, \cdot}$} & \thead{$s_{1,2,3, \cdot}$} & \thead{$s_{2,3,3, \cdot}$} \\
\hline
\hiderowcolors
$t_0$ & 0.857 (3) & 2.286 (8) & 0.357 (50) & 0.107 (15) & -- & 0.071 (500) \\
$t_1$ & 0.571 (2) & 3.429 (12) & 1.786 (250) & 0.1429 (20) & -- & 0.014 (100) \\
$t_2$ & 1.143 (4) & 4.286 (15) & 1.25 (175) & 0.1429 (20) & -- & 0.4 (2800) \\
\hline
\end{tabular}
\end{threeparttable}
\end{table}

\end{example}

In order to maintain transparency in the model, interpretability of its parameters, and to enable quick and efficient inference, we use a Gamma-Poisson conjugate setting for updating the distributions of the of $\varphi_{ijt}$ -- the random variables modelling the edge weights, for \textit{p}$_i$, \textit{p}$_j$\ $\in \calP_t$ and $t \geq 0$. To facilitate this, we adopt the approach of using discount factors to transform the posterior at time $t$ into the prior at time $t+1$ as described in \citet{WestandHarrison,Smith79}. The discount factor is a value in $(0, 1]$ that represents the decay of information from time $t-1$ to time $t$. Additionally, we assume the following plausible conditional independence relationships 
  \begin{align}
        \varphi_{ijt} &\indep \mathcal{F}_{t} 
        \,\mid\, \varphi_{ij,t-1}, \label{eq:markov_assumption} \\
        s_{ijt} &\indep (\Phi_{t}
        , S_t 
        , \mathcal{F}_{t}) \, \mid\, \varphi_{ijt}. \label{eq:output_independence}
  \end{align}
\noindent where $\mathcal{F}_{t}$ denotes all past data and edge weight random variables up to but not including time $t$, i.e. $S_{t'}$ and $\Phi_{t'}$ for ${t' < t}$. Statement \ref{eq:markov_assumption} is a standard first-order Markov assumption and Statement \ref{eq:output_independence} implies that the pairwise communications data between any pair of suspects $\{\text{\textit{p}$_i$, \textit{p}$_j$}\} \in \calP_t \times \calP_t$ at a given time $t$ are only dependent on $\varphi_{ijt}$, the quantitative measure of information being exchanged between them at that time. This enables us to update the edge weight variables $\varphi_{ijt}$ using observational data $\textbf{s}_{ijt}$ for each pair \textit{p}$_i$ and \textit{p}$_j$\ independently, see Appendix B.

We now describe the forward filtering equations for each pair $\{\text{\textit{p}$_i$, \textit{p}$_j$}\} \in \calP_t \times \calP_t$ in the Terrorist Network:

\subparagraph{Initialisation:} Set the prior $\varphi_{ijt_0}$ as follows
\begin{equation}
\varphi_{ijt_0} \sim Gamma(\alpha_{ijt_0},\beta_{ijt_0})
\label{eq:gamma}
\end{equation}
\noindent where $t_0$ is the first time step of the time-series. The parameters $\alpha_{ijt_0}$ and $\beta_{ijt_0}$ are determined by existing case knowledge. For example, if $e_{ij} \in E(\calN_{t_0})$ exists only due to a social relation, then $\alpha_{ijt_0}$ and $\beta_{ijt_0}$ may be set such that the mean and variance of $\varphi_{ijt_0}$ are both relatively low. On the other hand, if $\calP_i$ and $\calP_j$ have a previous joint conviction then these parameters can be set such that the $\varphi_{ijt_0}$ has a high mean and lower variance. 
\subparagraph{Posterior at time $t-1$:} Let the posterior of $\varphi_{ij,t-1}$ after observing $s_{ij,t-1}$ and $\mathcal{F}_{t-1}$ be given by
\begin{equation}
\varphi_{ij,t-1} \,\mid\, \textbf{s}_{ij,t-1}, \mathcal{F}_{t-1} \sim Gamma(\alpha_{ij,t-1},\beta_{ij,t-1}).
\label{eq:post_at_t-1}
\end{equation}
\subparagraph{Prior at time $t$:} Using the discount factor $\delta_{ijt} \in (0,1]$, the posterior at time $t-1$ evolves to the prior at time $t$ as 
\begin{equation}
\varphi_{ijt} \,\mid\, \mathcal{F}_{t} \sim Gamma(\delta_{ijt} \alpha_{ij,t-1},\delta_{ijt} \beta_{ij,t-1}).
\end{equation}
Under this posterior-to-prior evolution, the mean of the distribution remains unaffected while the variance either remains the same (when $\delta_{ijt} = 1$) or increases (when $0 < \delta_{ijt} < 1$). Thus, a lower value of $\delta_{ijt}$ indicates a reduced confidence in the posterior at the previous time step as the variance increases. This is also associated with a decay of information from the previous time step depending on how much the situation is likely to have evolved since then. 


\subparagraph{Data generation at time $t$:} The observations from the different information channels are modelled independently as
\begin{equation}
s_{ijkt} \,\mid\, \varphi_{ijt}, \mathcal{F}_{t} \sim Poisson(\xi_k\varphi_{ijt}), \quad k = 1, \ldots, K.
\end{equation}
\subparagraph{Posterior at time $t$:} The posterior when the observation vector $\textbf{s}_{ijt}$ has at least one non-zero element is given by 
\begin{align}
p(\varphi_{ijt} \,\mid\, \textbf{s}_{ijt}, \mathcal{F}_{t}) &\propto 
\notag \prod_{k=1}^K p(s_{ijkt}\,\mid\,\varphi_{ijt}, \mathcal{F}_{t}) \, p(\varphi_{ijt} \,\mid\,\mathcal{F}_{t}) \notag \\
&=    \varphi_{ijt}^{\sum_k s_{ijkt} + \delta_{ijt}\alpha_{ij,t-1} -1} \, \exp(-(\textstyle \sum_k \xi_k + \delta_{ijt}\beta_{ij,t-1})\varphi_{ijt}) 
\notag \\
\varphi_{ijt} \,\mid\, \textbf{s}_{ijt}, \mathcal{F}_{t} &\sim Gamma(\alpha_{ijt},\beta_{ijt})
\end{align}
\noindent where $\alpha_{ijt} = \delta_{ijt} \alpha_{ij,t-1} + \sum_k s_{ijkt}$ and $\beta_{ijt} = \delta_{ijt} \beta_{ij,t-1} + \sum_k \xi_k$. For the same value of $\sum_k s_{ijkt}$, a lower overall efficiency of the observations given by $\sum_k \xi_k$ results in a higher mean and larger variance -- indicating the associated increase in uncertainty -- of $\varphi_{ijt}$ compared to when the overall efficiency is higher. 


The distribution of $\varphi_{ijt}$ for a pair $\{\text{\textit{p}$_i$, \textit{p}$_j$}\}$ can hence be periodically updated over the evolution of time $t$ in closed-form using the above recurrences across the Terrorist Network given sequential incoming observational data. The dynamic nature of the open population is easily incorporated in our model by introducing vertices, edges and priors for immigrants (new entrants) and removing them for emigrants (leavers) at the appropriate time. Finally, we note here that in a policing and counter-terrorism setting, it is essential to differentiate between the following cases:

\begin{enumerate}
    \itemsep0em
    \item $\sum_k s_{ijkt} = 0$ because \textit{p}$_i$\ and \textit{p}$_j$\ were monitored but did not communicate in any way during time $t$;
    \item $\sum_k s_{ijkt} = 0$ because \textit{p}$_i$\ and \textit{p}$_j$\ were not closely monitored during time $t$.
\end{enumerate}

In the first case, the posterior update is carried out as described above as we have \textit{observed} zero communications. Notice that if no new information is observed through $\textbf{s}_{ijs}$, $s \geq t$ then the variance of $\varphi_{ijs}$, $s \geq t$ will keep increasing. To prevent this and to reflect that we expect a baseline amount of information flow to continue between a pair of suspects \textit{p}$_i$\ and \textit{p}$_j$\ who share an edge between them -- until we observe information indicating otherwise -- we can set the discount factor as $\delta_{ijt} =  d_{ij} + (1 -d_{ij})  \exp(- \sum_k s_{ijk,t-1}\xi_{k})$ as detailed in \citet{Chenwest}. Here $d_{ij}$ is the baseline discount factor for pair $\{\text{\textit{p}$_i$, \textit{p}$_j$}\}$. This is particularly useful if we expect to have large consecutive gaps of time when we do not expect to observe good quality data on the pairs. When we observe very low levels of quality information in the previous time, the discount factor is closer to 1 and when good quality information is observed, the discount factor will be closer to $d_{ij}$. This setting allows us to set pair-specific discount factors if required. 

\begin{example}(A simple hypothesised terrorist group (continued)) Table \ref{tab:parameters_multiple} shows the prior and posterior parameters for $\varphi_{\cdot, \cdot, \cdot}$ for pairs $\{\text{\textit{p}}_1, \text{\textit{p}}_2\}$ and $\{\text{\textit{p}}_2, \text{\textit{p}}_3\}$ at times $t, t+1$ and $t+2$. For both pairs, the prior at time $t$ is set to have a low mean with some uncertainty. The updating is carried out as described above with discount factor as $\delta_{\cdot, \cdot, \cdot} = 0.7$ for all $t$.

\begin{table}
\caption{Evolution of the prior and posterior parameters for $\varphi_{\cdot, \cdot, \cdot}$ for time periods $t, t+1$ and $t+2$.}
\label{tab:parameters_multiple}
\begin{threeparttable}
\begin{tabular}{c|llll l c|llll}
\headrow
\thead{Time} &   \thead{$\varphi_{1,2, \cdot}$} &  &   \thead{$\varphi_{2,3, \cdot}$} & & & \thead{Time} &   \thead{$\varphi_{1,2, \cdot}$} &  &   \thead{$\varphi_{2,3, \cdot}$} & \\
&  \thead{$\alpha$} &  \thead{$\beta$} &  \thead{$\alpha$} &   \thead{$\beta$} & & &  \thead{$\alpha$} &  \thead{$\beta$} &  \thead{$\alpha$} &   \thead{$\beta$}\\
\hline
\hiderowcolors
$t$ prior  &   0.70 &       1.41 &    0.70  & 1.41 & & $t+1$ post   &   3.6968 &   3.707 & 5.8007  &  5.407\\
$t$ post   &   1.914 &       3.01 &  3.164   &  4.01 & & $t+3$ prior  &   2.5878 &      2.5949 & 4.0605    &   3.7849 \\
$t+1$ prior  &   1.3398 &     2.107 & 2.2148    &  2.807 & & $t+3$ post   & 4.9808 & 4.1949 &  8.8894 & 6.3849 \\
\hline
\end{tabular}
\end{threeparttable}
\end{table}

\end{example}



\subsection{Related Work} \label{subsec:related_work}



Network data relating to activities of opposition and terrorist forces have been analysed using SNA techniques including link analysis since at least World War II \citep{FTA1948,Meter2002}. Examples of link analysis and network survey methods being used to gain intelligence and strategise can also be found from The Troubles in Northern Ireland where these methods were used to ``target individuals to assassinate", and in Thailand where the US army used "village survey" methods to ``interview village members and note family and community relationships" in connection to the fight against the Communist rebels \citep{Meter2002}.

Within academic contexts, the merits of network analysis for terrorism research were originally assessed by \cite{SPARROW1991} in his seminal paper. The author motivated the importance of network analysis concepts such as centrality, node degree, betweenness, closeness, stochastic equivalence and Euclidean centrality after multidimensional scaling within the context of criminal and terrorist networks. He further emphasised issues such as ``weak ties" which indicate that the most valuable and urgent communication channels are likely to be those ``which are seldom used and which lie outside the relatively dense clique structures", ``fuzzy boundaries" which indicate that boundaries of such networks can be quite ambiguous, and ``incompleteness" indicating that data relating to these networks are likely to be incomplete with informative missingness. Prior to the work of \cite{SPARROW1991}, the leading method of network analysis within law enforcement was the Anacapa charting system \citep{harper1975application} developed by Anacapa Sciences Inc., California and widely used since its introduction. This charting system provided a two-dimensional visual representation of the link data and allowed the user to clearly pick out features such as links, centrality, cliques etc but the charting system itself did not involve any analysis of the data these charts represented. 

Following \cite{SPARROW1991}, the application of SNA methods within criminal and terrorist networks has been researched extensively. This includes using centrality measures including node removal methods to identify key individuals and heterogeneous roles \citep{Toth2013, lee2012criminal, memon2006practical, berzinji2012detecting}, bipartite and multipartite graph methods to identify overlapping cells \citep{ranciati2017, Campedelli2019}, dynamic line graphs to visualise the temporal dynamics of terrorist actors in covert actions and events \citep{Broccatelli2016}, spectral clustering methods to identify criminal groups \citep{van2013community}, graph distances to analyse criminal networks in presence of missing data \citep{ficara2021criminal}, link prediction methods within criminal and terrorist networks \citep{lim2019hidden, budur2015structural, berlusconi2016link, rhodes2007social}, and community detection in these networks using SNA methods \citep{ferrara2014detecting, xu2005crimenet, robinson2018detection, bahulkar2018community}. SNA methods have been used to analyse terrorist networks of varying levels of threat such as the terrorist networks of the September 11, 2001 terrorists \citep{Krebs2002mapping}, al-Qaeda \citep{sageman2004understanding}, the terrorists who carried out the 2015 and 2016 Islamic State attacks at Paris and Brussels \citep{remmers2019temporal}, the Salafi Jihad Network \citep{qin2005analyzing}, and the Sicilian Mafia \citep{calderoni2020robust}. However, these methods typically only cater to a fixed data source and do not take into account the criminal trajectory of each individual member of a terrorist group. 

Our proposed two component IDSS is closest to the approach taken by DARPA, Aptima Inc. and the University of Connecticut who developed the Adaptive Safety Analysis and Monitoring (ASAM) tool (for details, see \citet{allanach2004detecting, singh2004stochastic}). ASAM is a hierarchical system consisting of a collection of hidden Markov models (HMMs) at its lowest level where each HMM monitors the probability of a specific type of terrorist attack. The outputs from the HMMs feed into a series of subordinate dynamic Bayesian networks (DBNs) -- each subordinate DBN represents a different terrorist model -- which, in turn, feeds into a larger DBN that evaluates the overall probability of a terrorist attack. The ASAM approach is similar to our proposed IDSS, in that it extracts signals of suspicious activities from noisy, partial data based on the understanding that any terrorist attack has a goal for which certain tasks must be carried out and certain communications must take place among a terrorist network for the goal to be achieved. However, unlike the ASAM approach, our two component IDSS approach is Bayesian which allows critical expert knowledge and instincts to be incorporated into the priors for the IDSS and all its parameters can be analytically derived which imparts complete transparency to the working of the IDSS. Additionally, our approach takes into account the threat posed by each member of a terrorist group which the ASAM approach does not fully consider. 


\section{IDSS for Activities of Terrorist Groups} \label{sec:idss_for_terrorism}


The decoupling methodology of MDMs described in Section \ref{subsubsec:Decoupling} enables us to formally decouple the Terrorist Network and the individual Lone Terrorist models of each \textp $\in \calP_t$ for each time $t \geq 0$ and then recombine them within a modular IDSS. In fact, this decoupling methodology, although well-demonstrated within MDMs, has not been exploited out of this setting until now. Recall that the properties of this methodology rely only on the initial independencies set through the prior parameters and on the DAG structure linking the components of the time-series. 

We first briefly review the required notation. $\textbf{Y}_t$ refers to the data relating to the activities of a suspect \textit{p}\ at time $t \geq 0$ in their Lone Terrorist model. To generalise this notation to a population of suspects $\calP_t$, let $\textbf{Y}_{it}$ denote the data relating to the activities of suspect \textit{p}$_i$ $\in \calP_t$ at time $t \geq 0$. Further, $\textbf{s}_{ijt}$ is the $K$-dimensional vector containing summary measures of the information shared between individuals \textit{p}$_i$ and \textit{p}$_j$ through the $K$ information channels at time $t \geq 0$. The Terrorist Network model $\calN_t$ for population $\calP_t$ can now be coupled with the $\abs{\calP_t}$ Lone Terrorist models -- one for each \textp $\in \calP_t$ -- through a DAG which contains edges from $\textbf{Y}_{it}$ and $\textbf{Y}_{jt}$ to $\textbf{s}_{ijt}$ for each pair $\{\text{\textit{p}$_i$, \textit{p}$_j$}\} \in \calP_t \times \calP_t$, and no other edges. Recall that $\textbf{s}_{ijt} = \textbf{s}_{jit}$ by design and so we explicitly model only one of these within the DAG. For instance, consider $\calP_t = \{\text{\textit{p}$_i$, \textit{p}$_j$, \textit{p}}_k\}$. The DAG combining the individual Lone Terrorist models for \textit{p}$_i$, \textit{p}$_j$ and \textit{p}$_k$, and the Terrorist Network model $\calN_t$ at time $t \geq 0$ is given in Figure \ref{fig:dag_criminal_collab}. This DAG gives us the structure of the IDSS. 
\begin{figure}[h]
\vspace{1in}
\centering
\includegraphics[trim = 1.5cm 0cm 0cm 11.5cm, scale = 0.24 ]{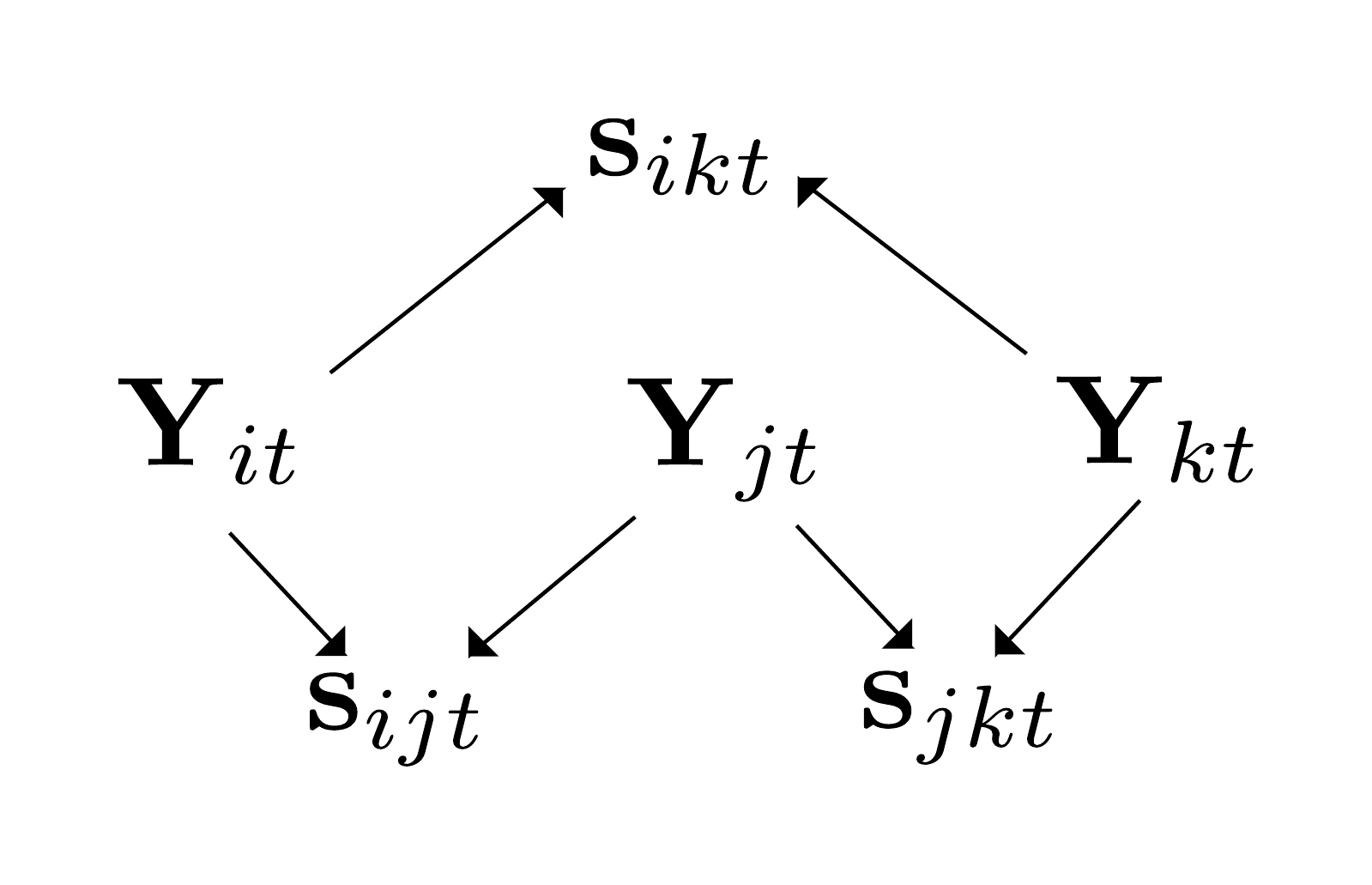}
\caption{The DAG of the IDSS after combining the Lone Terrorist and Terrorist Network models.}
\label{fig:dag_criminal_collab}
\end{figure}

Since $\textbf{s}_{ijt}$ contains all the observed information about the pairwise communications needed to estimate the edge weight modelled by random variable $\varphi_{ijt}$ in the Terrorist Network, and typically $\textbf{s}_{ijt} \subset \textbf{Y}_{it}$ and $\textbf{s}_{ijt} \subset \textbf{Y}_{jt}$, the estimation of $\varphi_{ijt}$ can be performed independently of $\textbf{Y}_{it}$ and $\textbf{Y}_{jt}$ when $\textbf{s}_{ijt}$ is given.


\subsection{Indicators of a Terrorist Attack} \label{subsec:indicators}

The IDSS combining the Lone Terrorist models and the Terrorist Network for the population of suspects under consideration $\calP_t$ at time $t \geq 0$ can now be used to construct simple yet powerful early warning indicators of the imminence of threat posed by a known or suspected terrorist group. These early warning indicators can be designed to facilitate pre-emptive action to frustrate potential attacks. We demonstrate below how such indicators might be constructed and how they could be utilised to forewarn the counter-terrorism authorities of potential attacks. Note that such indicators would not be able to definitively guarantee that an attack is imminent. Instead, these are meant for authorities to flag the activities of the concerned individuals for increased monitoring and scrutiny. 

We shall, hereafter, refer to suspected or known terrorist groups as \textit{cells}. There are various definitions of what constitutes a terrorist cell. \citet{shapiro2005organizing} states that ``a cell is best understood as an individual or group of individuals that take consequential actions". For our purposes we present a simple definition of a cell within our model as described below. A cell $C \subset \calP_t$ is defined as a group of individuals who induce a connected subgraph in the Terrorist Network $\calN_t$ at time $t \geq 0$. The composition of $C$ is typically determined by the counter-terrorism authorities, based on their expert judgement, such that $C$ is potentially an organisational unit for potential attack. 

We first extend the notation as follows for the Lone Terrorist model of a suspect \textit{p}$_i$ $\in \calP_t$
\begin{align*}
    \pmb{\Theta}_t &= \{\pmb{\vartheta}_{i t} \,:\, \text{\textit{p}}_i \in \calP_t \} \in \{0,1\}^{R \times \abs{\calP_t}}\\
    \pmb{Z}_t &= \{\pmb{Z}_{i t} \,:\, \text{\textit{p}}_i \in \calP_t \} \in \mathbb{R}^{R \times \abs{\calP_t}} \\
    \pmb{X}_t &= \{X_{i t} \,:\, \text{\textit{p}}_i \in \calP_t \} \in
    \{x_0, x_1, \ldots, x_n\}^{\abs{\calP_t}}
\end{align*}
\noindent where $\pmb{\vartheta}_{it}, \pmb{Z}_{it}$ and $\pmb{X}_{it}$ are defined as $\pmb{\vartheta}_{t}, \pmb{Z}_{t}$ and $X_{t}$ respectively in Section \ref{subsec:lone_terrorist}.


\subparagraph{Collective progress:} We can construct a \textit{Terrorist Cell} model for modelling the progress of a cell $C$, as a separate entity, towards a terrorist attack by performing a multivariate extension of the Lone Terrorist model. Here $X_t^C, Z^C_t$ and $\pmb{\pi}_{t}^C$ are defined as $X_t, Z_t$ and $\pmb{\pi}_t$ in Section \ref{subsec:lone_terrorist}. The data $\textbf{Y}_t^C$ at the surface level is given as $\textbf{Y}_t^C = \cup_{\text{\textit{p}} \in C} \textbf{Y}_t^{\text{\textit{p}}}$. Within a collaborative unit such as a cell, there will be some tasks that need only be done by a subset of the members of the cell; for example figuring out the logistics or developing certain skills. Thus, the filtered data $\textbf{Z}_t^C$ obtained from the collective data on the cell $\textbf{Y}_t^C$ must be set against these requirements to indicate whether the tasks are being sufficiently completed. Let $\pmb{T}^C$ be the subset of the state space of $X_t^C$ that indicates the set of states considered to be most indicative of an imminent attack by the authorities. One possible measure of collective progress $m_1$ of the cell can then be obtained as
\begin{align}
    m_1 = \sum_{x_i^C \in \pmb{T}^C} \pi_{ti}^C.
\end{align}

\subparagraph{Individual threat:} As discussed above, within a cell, not all tasks need to be performed by each and every member of the cell. Ideally we would like to be able to identify, for each member of a cell $C$, the role that they play within the cell. However, this is not always possible as it requires detailed understanding of the cell's dynamics -- intelligence which is extremely sensitive and difficult to gather \citep{duijn2014relative}. An alternative is to evaluate the threat status of the individuals in $C$ based on their progress on the tasks $\pmb{\vartheta}_t^* \subset \pmb{\vartheta}_t^C$ that \textit{most of the members} of $C$ are expected to have the skills to do. The states for each individual's Lone Terrorist model can be adapted in line with this to obtain the product of measures of individual threat $m_2$ for each member of $C$ as

\begin{align}
    m_2 = \prod_{\text{\textit{p}} \in C} \bigg \{{\sum_{x_i \in \pmb{T}} \pi_{ti}^\text{\textp}} \bigg \}
\end{align}
\noindent where $\pmb{T}$ denotes the set of most dangerous threat states in the individual Lone Terrorist models.

\subparagraph{Latent collaboration:} In any cell, we may not expect each pair to be communicating with each other. However, for any successful collaboration, a certain amount of connectivity is expected between each communicating pair and overall in the cell. Hence we set up two different measures of latent collaboration. For each communicating pair $\{\text{\textit{p}$_i$, \textit{p}$_j$}\}$ in $C$, we measure pairwise cohesion $m_3^*$ as
\begin{align}
    m_3^* = p(\varphi_{ijt} > \ell)
\end{align}
\noindent where $\ell$ is the lower limit of how much we expect each pair to be communicating for the terrorist attack to be enacted. A cell-level measure of pairwise cohesion $m_3$ can be obtained as
\begin{align}
    m_3 = \prod_{\{\text{\textit{p}$_i$, \textit{p}$_j$}\} \in \calP_t \times \calP_t}  p(\varphi_{ijt} > \ell).
\end{align}
Similarly, cell-level cohesion can also be measured through the subnetwork density $m_4$ of $C$ as
\begin{align}
   m_4 = \frac{k}{\binom{n}{2}}
\end{align}
where $k=\abs{E(C_t)}$ represents the number of ties shared by the members of cell $C$ in the network model $\calN_t$ at time $t \geq 0$, $n = \abs{C_t}$ is the size of the cell $C$ and thus $\binom{n}{2}$ is the number of possible ties in $C$.

\subparagraph{Size of the cell:} Although collaborative efforts benefit from sharing resources and skills, a large cell can be unwieldy and increases the risk of the exposure of that cell. For a given type of terrorist attack, the authorities are likely to be able to estimate an ideal cell size $p^*$ either from expert knowledge and intuition or from the literature and reported cases of a similar nature. One simple measure of cell integrity $m_5$ could be obtained as
\begin{align}
   m_5 = \sech(\frac{p - p^*}{p^*}),
\end{align}  
\noindent where $\sech(\cdot)$ is the hyperbolic secant function which only applies a heavy penalisation for larger deviations from $p^*$.\\

We now describe how the above measures $m_i$ for $i = \{1,2, \ldots, 5\}$ may be combined to obtain indicators of a terrorist attack. For a given type of terrorist attack, a cell is most threatening when $m_1 = m_2 = m_3= m_4 = m_5 = 1$. We can obtain an ordered set of indicators of a terrorist attack $\{\varphi_C(i)\}$, $i \in \{0,..,4\}$ as 
\begin{align}
    \varphi_C(i) &= \prod_{j=1}^{5-i} m'_j \\
    \{m'_j\}_{j=1,..,5} &= \sigma(\{m_i\}_{i=1,..5}) \nonumber
\end{align}
where $\sigma$ is a permutation of elements such that for $i=1, \ldots 4$, we have $ 0 \leq m'_{i+1} \leq \ m'_i \leq 1$ and hence for $i=0, \ldots 3$, we have $0 \leq \varphi_C(i) \leq \varphi_C(i+1) \leq 1$. This ordered set is used to check whether the values of one or more measures are overly affecting the base $\varphi_C(0)$ score. Each of these indicators has the property that a higher value of $\varphi_C(i)$ indicates a greater imminence and danger of the threat posed by the cell $C$. Thus, several key factors may be combined to obtain transparent indicators of threat which can guide the counter-terrorism authorities to prioritise and de-prioritise cases. These indicators can be plotted against time to analyse how the threat posed by the cell develops dynamically. Note that the measures used to construct the indicators here can be easily adapted to incorporate other elements that may be considered to be essential by the counter-terrorism authorities, see e.g. \cite{xu2004analyzing, yang2006analyzing}.

\section{Analysis of a Hypothetical Terrorist Group} \label{sec:analysis}

Real-world data regarding the pre-incident activities of frustrated and successful violent plots perpetrated by terrorist groups are often confidential. Hence, here we illustrate the methods from this paper with simulated data for Example \ref{ex:criminal_example}. The data is simulated from some time $t_1$ which is equivalent to time $t+1$ in Example \ref{ex:criminal_example}, and is informed by meetings with relevant policing authorities and publicly available data on various real-world terrorism cases.

\subsection{Lone Terrorist Model for Each Suspect}
\label{Subsec:ind_rve-ex}

Recall that at time $t_1$ the individuals being monitored by the authorities are given by $\calP_{t_1} = \{\text{\textit{p}}_1, \text{\textit{p}}_2, \text{\textit{p}}_3, \text{\textit{p}}_4\}$. Let the sample space of the latent random variable $X_{it}$ in the Lone Terrorist model for suspect \textit{p}$_i$, $1 \leq i \leq 4$ and time $t \geq t_1$ be given by the states $\{\text{`Active'}$, $\text{`Training'}, \text{`Preparing'}, \text{`Mobilised'}, \text{`Neutral'}\}$. Using the criminal profiles of these suspects, and based on their past and current activities, the prior probabilities of the state $X_{i,t_1}$ occupied by these individuals at time $t_1$ are shown in in Figure \ref{fig:ind_rve}. Suspect \textit{p}$_4$ is believed to have received training by pro-terrorist groups and hence has probability weighted toward the $\text{`Training'}$ state, whereas the others have only stated their views and intentions but there is no indication otherwise of them training or preparing, hence they are weighted toward the $\text{`Active'}$ state.

\begin{figure}[ht]
\centering
\begin{subfigure}[t]{0.42\textwidth}
    \includegraphics[scale = 0.35]{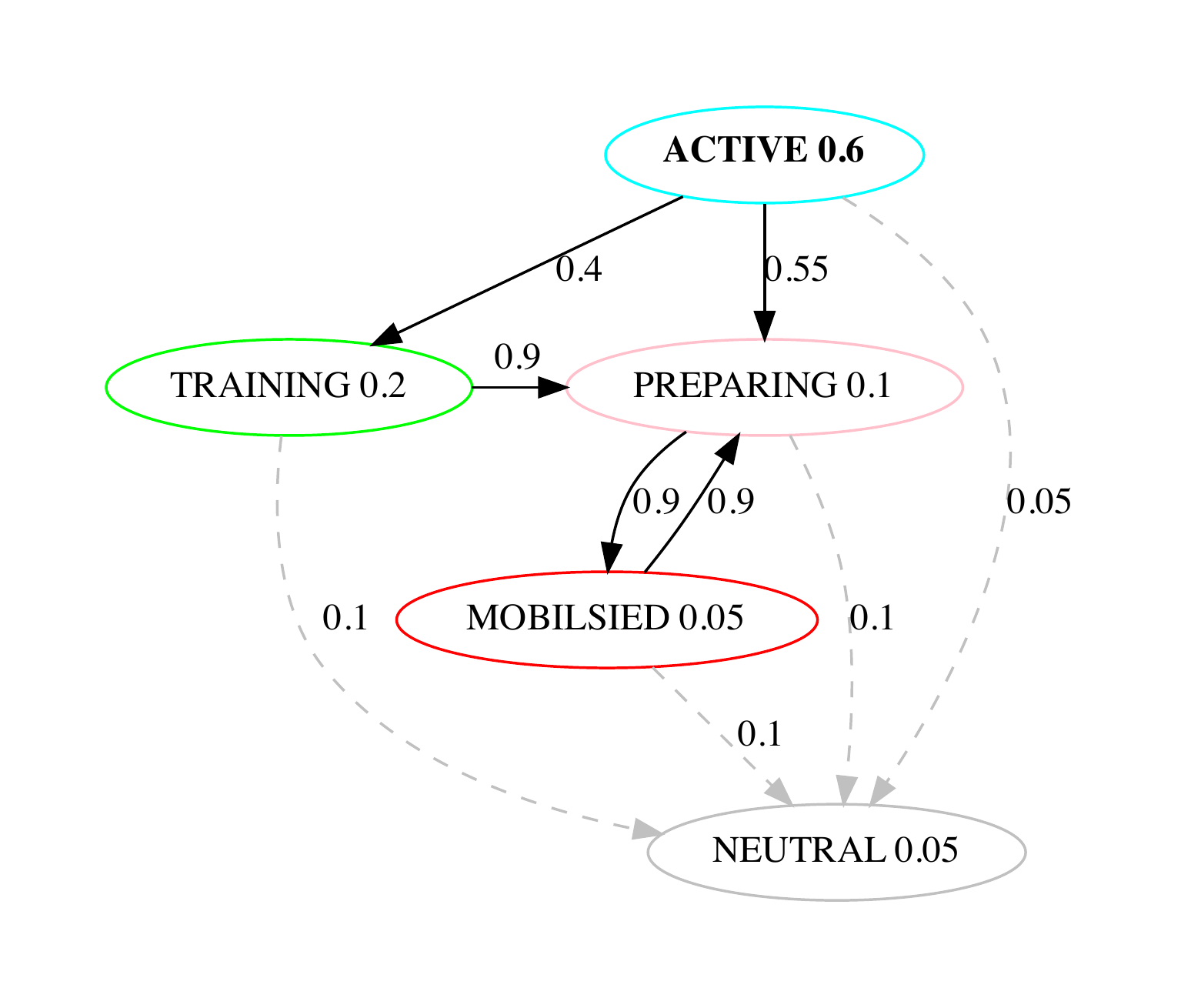}
    \caption{Graph of RVE for \textit{p}$_1$, \textit{p}$_2$ and \textit{p}$_3$.}  
\end{subfigure}
\hfill
\begin{subfigure}[t]{0.42\textwidth}
    \includegraphics[scale = 0.35]{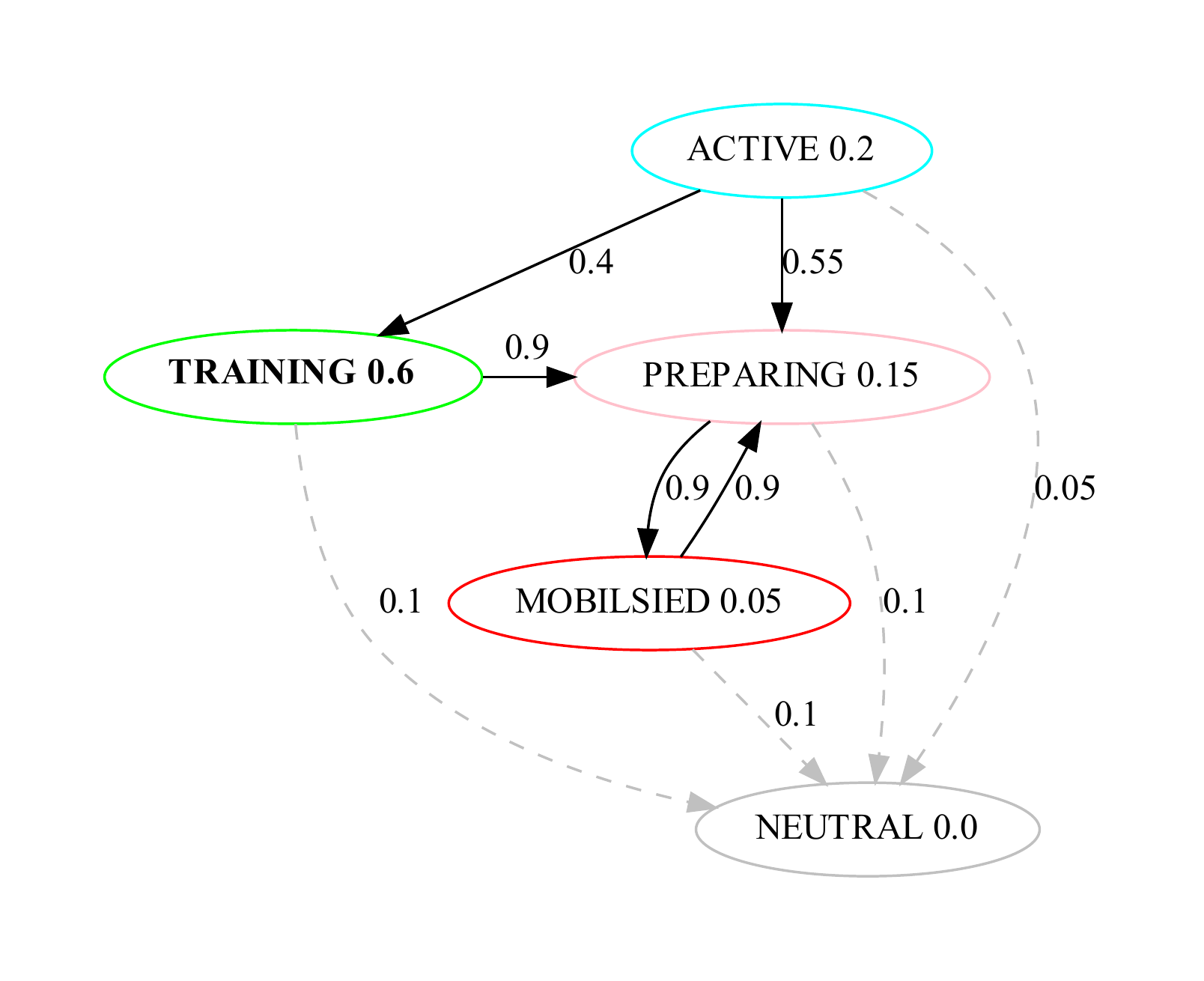}
    \caption{Graph of RVE for \textit{p}$_4$.}
\end{subfigure}
\caption{In both figures, the vertex labels include the prior state probability, edge labels denote the conditional transition probability at time $t_1$.}
\label{fig:ind_rve}
\end{figure}

As these four individuals are in $\calP_{t_1}$, their activities and communications are monitored by the authorities. It is assumed for simplicity that over the ten weeks that follow, the composition of $\calP_{t_1}$ remains unchanged, i.e. none of the existing suspects leave and no new suspects enter this subpopulation. Over the following weeks, it is observed that suspect \textit{p}$_1$'s internet activities include repeated visits to websites of car dealers and car rentals, as well as knife retailers. Their bank account also shows a large influx of funds from an overseas bank account. The internet activity of suspect \textit{p}$_2$ includes visits to illegal bomb making websites, and repeated visits to and comments on extremist radical forums. Suspect \textit{p}$_4$'s internet activity includes searches for online maps and blueprints of government buildings and densely populated commercial areas of the town. Suspect \textit{p}$_4$ is also observed to have physically visited potential bomb testing sites. A full description of the activity data for each of the four suspects observed over a period of ten weeks is given in Appendix C. Using this activity data, the posterior probabilities of $X_{i,t_k}$ are updated in the Lone Terrorist model over the ten weeks for each \textit{p}$_i$ as shown in Figure \ref{fig:rve_ind_evolution} for $i = 1,2,3,4$ and time $1 \leq k \leq 10$. 

\begin{figure} [h]
\centering
\begin{subfigure}[t]{0.46\textwidth}
    \includegraphics[scale = 0.32, trim = {0cm, 0, 0 ,0}]{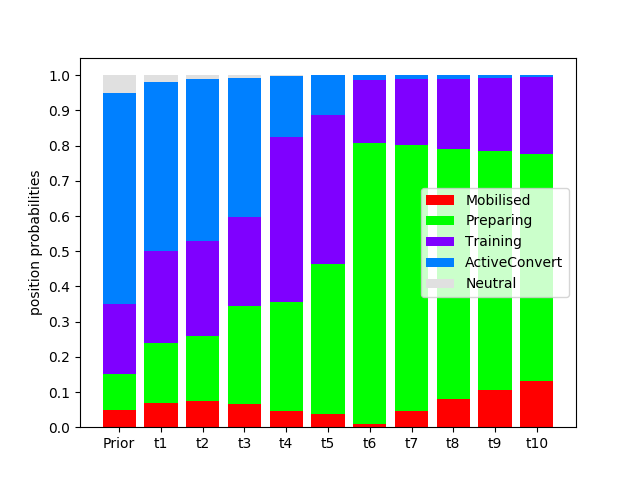}
    \caption{Posterior probabilities for \textit{p}$_1$}
\end{subfigure}
\hfill
\begin{subfigure}[t]{0.46\textwidth}
    \includegraphics[scale = 0.32, trim = {0cm, 0, 0 ,0}]{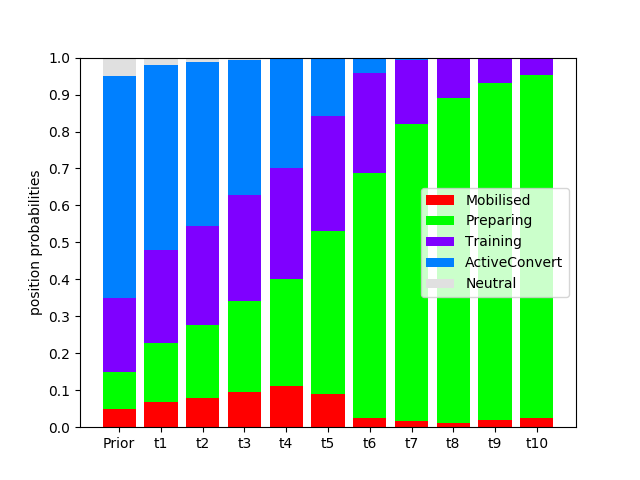}
    \caption{Posterior probabilities for \textit{p}$_2$}
\end{subfigure}
\begin{subfigure}[t]{0.46\textwidth}
    \includegraphics[scale = 0.32, trim = {0cm, 0, 0 ,0}]{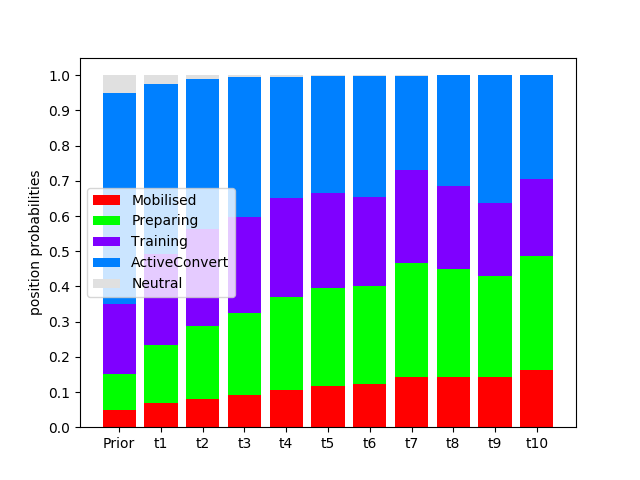}
    \caption{Posterior probabilities for \textit{p}$_3$}
\end{subfigure}
\hfill
\begin{subfigure}[t]{0.46\textwidth}
    \includegraphics[scale = 0.32, trim = {0cm, 0, 0 ,0}]{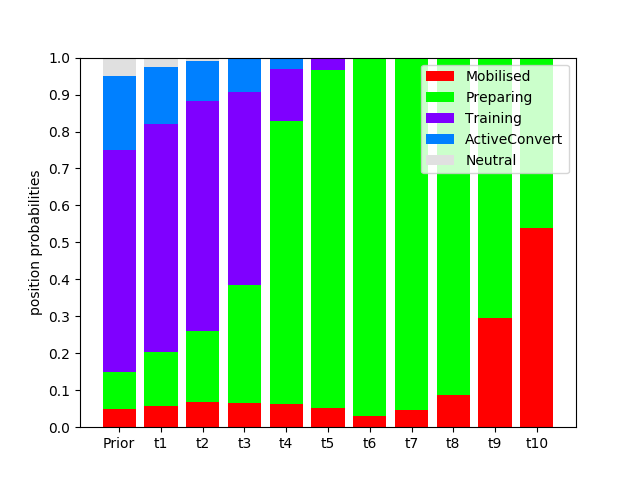}
    \caption{Posterior probabilities for \textit{p}$_4$}
\end{subfigure}
\caption{Posterior threat state probabilities from the Lone Terrorist models of the suspects over the ten weeks.}
\label{fig:rve_ind_evolution}
\end{figure}

\subsection{Terrorist Network Model}
\label{Subsec:network_ex}

Not only are the activities of these suspects being observed, but their communications and interactions with each other are also being recorded over the ten week period. For simplicity, it is assumed here that the pairwise communications data are received from only one information channel: mobile phone calls. The phone call data between a pair of suspects is summarised as the sum of the phone calls in hours between the pair observed over the week. Since we have only one information channel here, we set its efficiency parameter as 1. Table \ref{tab:phone_calls} shows the summary data of the phone calls between each pair in our subpopulation over the ten week period. By week $t_6$, all four individuals share pairwise communications with each other. With this information, we can create edges $e_{1,3}$ and $e_{2,4}$ at time $t_5$, and edge $e_{3,4}$ in the network at time $t_6$ which results in the graph of the network becoming a complete graph. The total time of these phone calls are also observed to increase from weeks $t_7$ to $t_{10}$.


\begin{table}
\caption{Simulated weekly sum of communication duration data. All the zeros in this table indicate that the pair did not communicate through mobile phone call in that week.}
\label{tab:phone_calls}
\begin{threeparttable}
\begin{tabular}{c|llllll l c|llllll}
\headrow
\thead{Time}&  \thead{$s_{1,2}$} &  \thead{$s_{1,3}$} &  \thead{$s_{1,4}$} &  \thead{$s_{2,3}$} &  \thead{$s_{2,4}$} &  \thead{$s_{3,4}$} & & \thead{Time} &  \thead{$s_{1,2}$} &  \thead{$s_{1,3}$} &  \thead{$s_{1,4}$} &  \thead{$s_{2,3}$} &  \thead{$s_{2,4}$} &  \thead{$s_{3,4}$} \\
\hline
\hiderowcolors
$t_1$  &     0 &     0 &     0 &     0 &     0 &     0 & &  $t_6$ &     5 &     6 &     6 &     5 &     6 &     1 \\
$t_2$  &     3 &     0 &     0 &     1 &     0 &     0 & & $t_7$  &     7 &     6 &     7 &     6 &     7 &     7\\
$t_3$  &     5 &     0 &     2 &     0 &     0 &     0 & & $t_8$ &     6 &     6 &     8 &     4 &     8 &     8\\
$t_4$ &     5 &     0 &     5 &     0 &     0 &     0 & & $t_9$  &     7 &     7 &     9 &     7 &     9 &     9\\
$t_5$  &     5 &     2 &     5 &     0 &     1 &     0 & & $t_{10}$ &     7 &     8 &    11 &     8 &    10 &    10 \\
\hline
\end{tabular}
\end{threeparttable}
\end{table}

Using the phone communication data, the edge weight distributions $\varphi_{i,j,t_k}$ for $i,j = 1,2,3,4$, $i \neq j$ and $1 \leq k \leq 10$ for the Terrorist Network model can be estimated as follows. The prior distributions for $\varphi_{i,j,t_k}$ are set by specifying the $\alpha$ and $\beta$ parameters of the prior Gamma distributions. For instance, based on the prior knowledge the policing authority has on the suspects, they believe that the extent of information shared between \textit{p}$_1$ and \textit{p}$_2$, and between \textit{p}$_2$ and \textit{p}$_3$ is relatively low with some uncertainty at time $t_1$. Hence, the $\alpha$ and $\beta$ parameters are set as 0.7 and 1.41 respectively for $\varphi_{1,2,t_1}$ and $\varphi_{2,3,t_1}$. The setting of the $\alpha$ and $\beta$ parameters for $\varphi_{i,j,t_k}$ for all pairs over the ten week period are given in the supplementary materials. The appendix contains figures showing the evolution of $\varphi_{i,j,t_k}$ through the posterior densities from time $t_3$ to $t_6$. The discount factor $\delta_{i,j,t_k}$ is set to 0.7 across all pairs and for the entire ten week duration so that approximately half the current information is lost over two time steps.

\subsection{Indicators of an Attack}
\label{Subsec:cell_level_threat_ex}

Assuming that the four suspects are working together within a cell, the measures $m_i$ for $i = 2,3,4,5$ described in Section \ref{subsec:indicators} are calculated. Note that for the Terrorist Cell model for measure $m_1$, the task set and observation data are given by the union of the task sets and observation data for the individual Lone Terrorist models of the cell's members. The prior threat state probabilities for the Terrorist Cell model are taken as the corresponding prior probabilities of the suspect within the cell with the highest prior threat, i.e. \textit{p}$_4$. Figure \ref{fig:evolve_rdceg} shows the evolution of the threat state probabilities in the Terrorist Cell model. The posterior probability of the cell being in the $\text{`Preparing'}$ state increases from time $t_5$ as the communications within the cell and the overall activities of the cell increase. Thereafter around time $t_9$, the posterior probability of the cell being in the $\text{`Mobilised'}$ state increases sharply. These measures are combined to obtain the indicators of a terrorist attack $\varphi_C$ as shown in Table \ref{tab:cell_measures_table}. If we were to signal a warning when $\varphi_C(\cdot)$ reaches a certain threshold, say $0.15$, then we can see that for $\varphi_C(0)$ this is not reached till time $t_7$ whereas for $\varphi_C(2)$ this is reached by time $t_3$. In practise, the measures informing these indicators and the chosen thresholds would need to be calibrated using domain experience and judgement. 

\begin{figure} [h]
\centering
\includegraphics[scale = 0.4, trim = {0cm, 0, 0 ,0}]{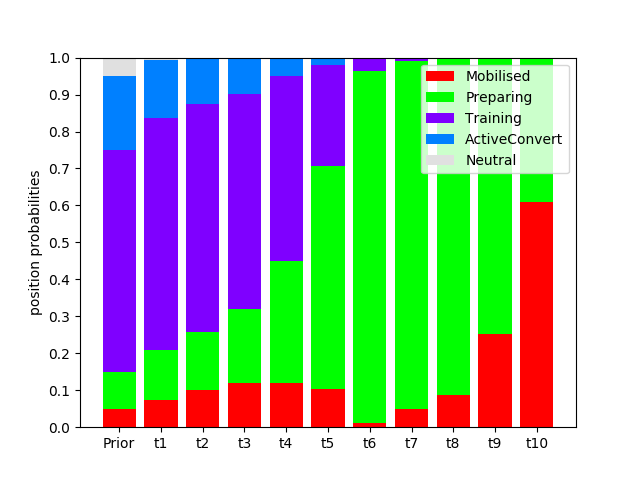}
\caption{Posterior threat state probabilities from the Terrorist Cell model over the ten weeks.}
\label{fig:evolve_rdceg}
\end{figure}

\begin{table}
\caption{Indicators of a terrorist attack for the cell under consideration.}
\label{tab:cell_measures_table}
\begin{threeparttable}
\begin{tabular}{c|rrrrrrrrrrr}
\headrow
{} &  \thead{Prior} &   \thead{$t_1$} &   \thead{$t_2$} &   \thead{$t_3$} &   \thead{$t_4$} &   \thead{$t_5$} &   \thead{$t_6$} &   \thead{$t_7$} &   \thead{$t_8$} &   \thead{$t_9$} &  \thead{$t_{10}$} \\
\hline
m1               &   0.15 & 0.21 & 0.26 & 0.32 & 0.45 & 0.71 & 0.96 & 0.99 & 1.00 & 1.00 & 1.00 \\
m2               &   0.00 & 0.00 & 0.01 & 0.01 & 0.04 & 0.09 & 0.22 & 0.31 & 0.32 & 0.31 & 0.36 \\
m3               &   0.14 & 0.05 & 0.14 & 0.04 & 0.03 & 0.00 & 0.30 & 1.00 & 1.00 & 1.00 & 1.00 \\
m4               &   0.67 & 0.67 & 0.67 & 0.67 & 0.67 & 0.67 & 0.67 & 0.67 & 0.67 & 0.67 & 0.67 \\
m5               &   0.83 & 0.83 & 0.83 & 0.89 & 0.89 & 0.89 & 0.89 & 0.89 & 0.89 & 0.89 & 0.89 \\
\hline
$\varphi_{C}(0)$              &   0.00 & 0.00 & 0.00 & 0.00 & 0.00 & 0.00 & 0.04 & 0.18 & 0.19 & 0.19 & 0.21 \\
$\varphi_C(1)$            &   0.01 & 0.01 & 0.02 & 0.01 & 0.01 & 0.04 & 0.17 & 0.58 & 0.59 & 0.59 & 0.59 \\
$\varphi_C(2)$            &   0.08 & 0.12 & 0.14 & 0.19 & 0.27 & 0.42 & 0.57 & 0.88 & 0.88 & 0.89 & 0.89 \\
$\varphi_C(3)$            &   0.55 & 0.55 & 0.55 & 0.59 & 0.59 & 0.63 & 0.86 & 0.99 & 1.00 & 1.00 & 1.00 \\
$\varphi_C(4)$            &   0.83 & 0.83 & 0.83 & 0.89 & 0.89 & 0.89 & 0.96 & 1.00 & 1.00 & 1.00 & 1.00 \\
\hline
\end{tabular}
\end{threeparttable}
\end{table}

This simple worked example demonstrates how observed activity data and communications data obtained on monitored suspects when combined with prior distributions calibrated to the investigator's knowledge can give real-time indicators of the evolving threat posed by individuals acting in collaboration with others. In the scenario investigated here, driven by the increase in specific activity data and phone call duration, probability of the suspects forming the cell being in either the $\text{`Preparing'}$ or $\text{`Mobilised'}$ states by week $t_{10}$ increased, and correspondingly, the cell as a whole appeared to move from state $\text{`Preparing'}$ occupied since week $t_{5}$ to $\text{`Mobilised'}$ by week $t_{10}$. The indicators of an attack reflected a similar trend, e.g. $\varphi_C(2)$ increased from 0.27 during week $t_{4}$ to 0.57 during week $t_{6}$ and then reaching 0.88/0.89 for weeks $t_{7}$ to $t_{10}$.

\section{Discussion} \label{sec:discussion}

In this paper, we proposed a two-part IDSS combining together the outputs of a Terrorist Network and a collection of hierarchical Lone Terrorist models to aid the counter-terrorism authorities in monitoring the threat posed by individuals acting jointly towards a terrorist attack, and in prioritising and de-prioritising cases. The IDSS can be used to construct indicators of a terrorist attack which facilitate pre-emptive action to frustrate these attacks.

There are several avenues of research that can follow from the work presented in this paper. Recall that the edge weight $\varphi_{ijt}$ along an edge in the Terrorist Network $\calN_t$ is a measure of the pairwise communications shared directly between the suspects \textit{p}$_i$ and \textit{p}$_j$ at time $t$ connected by the edge. This definition of the edge weight then leads to the following conditional independence assumption:
\begin{equation}
        \indep_{\{\text{\textit{p}$_i$, \ \textit{p}$_j$}\} \in \calP_t \times \calP_t} \, \varphi_{ijt} \,\mid\, \mathcal{F}_{t}.
        \label{eq:phi_independence}
    \end{equation}
\noindent Thus, pairwise communications data can be used to estimate the edge weight $\varphi_{ijt}$. For an alternative interpretation of the edge weights as measures of the extent of collaboration between the individuals connected by the edge, the above conditional independence statement does not hold. For instance, the extent of collaboration $\varphi_{ijt}$ between \textit{p}$_i$ and \textit{p}$_j$ is also affected by the communications and interactions they both share with a common neighbour, say \textit{p}$_k$. This would additionally lead to the violation of the output independence assumption described in Statement \ref{eq:output_independence}. Under this independence structure, we would no longer be able to estimate $\varphi_{ijt}$ for each pair \textit{p}$_i$ and \textit{p}$_j$ independently. In this case, the decouple/recouple strategy introduced by \citet{gruber2016gpu, zhao2016dynamic} for financial and economic multivariate time-series applications could be explored, although the recoupling is unlikely to be closed form and would need to be numerically estimated. Here, any gains achieved by refining the interpretation of $\varphi_{ijt}$ to include collaboration in a broader sense would need to be weighed against the loss in transparency and interpretability due to the numerical estimation.

Another avenue of research would be to incorporate link detection within the Terrorist Network using existing link detection methods (see Section \ref{subsec:related_work}) to identify potentially hidden ties. This work can further be extended by developing a bespoke clustering algorithm using domain-specific stochastic set functions (as have been used in the criminology literature, see e.g. \citet{wang2013learning}) to identify previously unknown cells or monitor the evolution of new cells within the network.

Finally, the generic architecture of an IDSS using the decoupling methodology might be applicable to other domains where there is a requirement to integrate individual time-series with dynamic interactions among individuals, modelled by a network, who collaborate to realise a shared objective. Examples of this include social processes within politics, governments or communities where complex interacting individuals have shared objectives.

\bibliography{sample}



\end{document}


\maketitle

\renewcommand{\thesection}{\Alph{section}}
\section{State Space and Dynamics of the Integrated Model}
\subsection{Individual or Group Attack Model} \label{Sec:RVE}
We work within the standard probability model of a filtered probability space
\begin{align*}
\{\Omega,\calF,\mathbb{F},\mathbb{P}\}
\end{align*}
where $\Omega$ is the sample space, $\calF$ is a $\sigma$-field on $\Omega$, $\mathbb{F}$ is a filtration of non-decreasing subsets of $\calF$, and $\mathbb{P}$ is a probability measure on $\{\Omega,\calF\}$.

Let $X_t$ be the threat state of a individual or a collaborating group of individuals, represented as a latent discrete state continuous time stochastic process; 
$\pmb{\vartheta}_t$ the latent vector of binary valued tasks that progresses $X_t$ towards an attack; $\textbf{Y}_t$ the observed data produced by the task activities; and $\textbf{Z}_t$ a deterministic function of $\textbf{Y}_t$ that filters the data into a signal that informs the value of $\pmb{\vartheta}_t$. Latent here means that the value of a variable at time $t$ is not usually observed even at time $t$, e.g., $\sigma(X_t) \not\subset \calF_t$ where $\sigma(X_t)$ is the $\sigma$-field generated by $X_t$. With minor abuses of notation to aid understanding\footnote{Strictly $\textbf{Y}: \Omega \times \{0,1\}^R \times \R_+ \rightarrow \R^n$ and $\textbf{Z}: \R^n \times \R_+ \rightarrow \R^R$.}
\begin{align*}
X: &\Omega \times \R_+ \rightarrow \calX = \{x_0, \ldots, x_{m-1}\} \\
\pmb{\vartheta}: &\Omega \times \calX \times \R_+ \rightarrow \{0,1\}^R\\
\textbf{Y}: &\Omega \times \pmb{\vartheta} \times \R_+ \rightarrow \R^n \\
\textbf{Z}: &\textbf{Y} \times \R_+ \rightarrow \R^R
\end{align*}

Let $\textbf{I}_i$ be an index set of the indices of the tasks relevant for the the $i$th state $x_i$ such that the event that $X_t=x_i$ is independent of the tasks not in $\pmb{\vartheta}_{I_i}$ given the tasks in $\pmb{\vartheta}_{I_i}$:
\begin{align*}
\textbf{I}_i &\subset \{1,\ldots,R\} \\
\pmb{\vartheta}_{I_i} &= \{\vartheta_j\}_{j\in I_i} \\
\{X_t=x_i\} \; &\indep\; \{\vartheta_j\}_{j \not\in I_i} \ | \ \{\vartheta_j\}_{j \in I_i}
\end{align*}

Having elicited assumptions of the dependency structure between variables from domain experts, we construct that variables $X_t$, $\pmb{\vartheta}_t$, $\textbf{Y}_t$, and the signals $\textbf{Z}_t$ such that the below conditional independence statements hold
\begin{align*}
\pmb{\vartheta}_t \; &\indep \; \textbf{Y}_t \ | \ \textbf{Z}_t \\ 
X_t \; &\indep\; \{\textbf{Y}_t, \textbf{Z}_t\} \ | \ \pmb{\vartheta}_t
\end{align*}

The generative dynamics of $X_t$ is determined by the semi-Markov transition matrix $\calM$ in Equation \ref{Eqtn:semiMarkov} as
\begin{align}
\label{Eqtn:semiMarkov}
\calM(s,t) = \calM_0 \circ \calM_1 \in \R_+^{m \times m} \\
\calM_0 = [ p_{i,j} ]_{\{0 < i, j \leq m - 1\}} \nonumber\\
\calM_1 = [ q_{i,j}(s,t) ]_{\{0 \leq i, j \leq m - 1; 0 \leq u < v < \infty\}} \nonumber
\end{align}
where $p_{i,j}$ is the probability of transition from $x_i$ to $x_j$; so that $\calM_0$ is the Markov transition matrix for the embedded Markov chain of the semi-Markov process (see \citet{Cinlar1975}); and $q_{i,j}(s,t)$ is the holding time distribution at time $s$ for the random time $t-s$ that $X_t$ stays in state $x_i$ before transitioning to state $x_j$ given such a transition will occur; and $\circ$ is the element-wise product operator.

The generative process of $X_t$ is thus given by Equation \ref{Eqtn:SemiMarkovDynamics} as follows
\begin{align}
\label{Eqtn:SemiMarkovDynamics}
\pmb{\pi}(X_t|\calF_s) =\calM(s,t) \; \pmb{\pi}(X_s|\calF_s)
\end{align}
The updating of the state probability vector $\pmb{\pi}$ having calculated the signal $\textbf{z}_t$ from the observed data $\textbf{y}_t)$ is given by Equation \ref{Eqtn:SemiMarkovUpdate} as follows
\begin{align}
\label{Eqtn:SemiMarkovUpdate}
\pmb{\pi}(X_t|\calF_t) &= \pmb{\pi}(X_t|\textbf{z}_t) \nonumber \\ 
&= \pmb{\pi}(X_t|\calF_s) \circ \frac{\mathbb{P}(\textbf{z}_t|X_t)}{\mathbb{P}(\textbf{z}_t)} \nonumber \\
&\propto \pmb{\pi}(X_t|\calF_s) \circ \mathbb{P}(\textbf{z}_t|X_t) \nonumber \\
&= \{\; \calM(s,t) \; \pmb{\pi}(X_s|\calF_s) \; \} \circ \mathbb{P}(\textbf{z}_t|X_t)
\end{align}
where again $\circ$ is the element-wise product operator, here for two vectors, and
\begin{align*}
&\mathbb{P}(\textbf{z}_t|X_t) = \{\mathbb{P}(\textbf{z}_t|X_t=x_i)\}_{i=0,\ldots,m-1} \in \mathbb{R}^m \\
&\mathbb{P}(\textbf{z}_t|X_t=x_i) = \sum_{\pmb{\vartheta}_{I_i} \in \{0,1\}^{\abs{I_i}}}  \mathbb{P}(\textbf{z}_t|\pmb{\vartheta}_{I_i}) \mathbb{P}(\pmb{\vartheta}_t|X_t=x_i) \in \mathbb{R}\\
&\mathbb{P}(\textbf{z}_t|\pmb{\vartheta}_t) \in \mathbb{R} \\
&\mathbb{P}(\pmb{\vartheta}_t|X_t) \in \mathbb{R}^m \\
&\mathbb{P}(\pmb{\vartheta}_t|X_t=x_i) \in \mathbb{R}
\end{align*}

The state space $\calX$ of $X_t$ forms the vertices of the Reduced Dynamic Chain Event Graph $\mathfrak{G}$ (see \citet{shenvi2019}). The edges of $\mathfrak{G}$ are determined by the non-zero elements of the semi-Markov transition matrix $\calM$.

\subsection{Attack Group Network}
Let $\calP_t^*$ be the population of persons of interest at time $t$ and $\calN_t$ be an undirected weighted network over a subset of $\calP_t^*$. 
\begin{align*}
\calN_t &= (V(\calN_t),E(\calN_t))  \\
V(\calN_t) &= \calP_t \subset \calP_t^*  \\
E(\calN_t) &= \{e_{i,j}\} \subset \{ (\textit{p}_i,\textit{p}_j): \textit{p}_i, \textit{p}_j \in V(\calN_t)\}
\end{align*}
The vertices of $\calN$ are some set of individuals $\{\textit{p}_i\}_{i=1,\ldots, n} \subset \calP_t^*$. Each of these $\textit{p}_i$ has a dynamic threat state, namely $X_t(\textit{p}_i)$ and collectively they have a group threat level $X_t(C)$. Both $X_t(\textit{p}_i)$ and $X_t(C)$ have the dynamics as defined in section \ref{Sec:RVE}. The weight of the edge $e_{i,j}$ is denoted $\varphi_{i,j}$ and is itself a latent stochastic process
\begin{align*}
\varphi: \Omega \times E(\calN_t) \rightarrow \R_+
\end{align*}
Extending the domain of $\varphi$ to include all $(\textit{p}_i,\textit{p}_j) \in V(\calN_t) \times V(\calN_t)$ by setting $\varphi_{i,j} = 0$ when $(\textit{p}_i,\textit{p}_j) \not\in E(\calN_t)$ we obtain a positive symmetric square matrix.
\begin{align*}
\Phi &= [ \varphi_{i,j} ] \in \R^{N \times N} \\
N &= \abs{V(\calN_t)}
\end{align*}
Raw communications data, $\pmb{R}_{i,j}$ (see Section 4 of the main article) are observed on $K$ channels for each edge $e(i,j)$ of $\calN$ and are scaled and rounded to integer values $\pmb{S}_{i,j}$. This raw data, at any time $t$, is a non-deterministic function of the actual latent level of pairwise pairwise communications $\varphi_{i,j,t}$ between $\textit{p}_i$ and $\textit{p}_j$. We indicate the random nature of $\pmb{R}_{i,j}$ by showing that it is also a function of the sample space $\Omega$ and, for any particular value, a function of the \textit{state of the world} abstractly represented by the sample point $\omega$ in the sample space. In contrast $S$ is a deterministic function of $R$.
\begin{align*}
\pmb{R}_{i,j}: \Omega \times \R \times \R_+ \rightarrow \R^k \\
\pmb{R}_{i,j}(\omega,\varphi_{i,j},t) \in \R^K \\
\pmb{S}_{i,j}: \R^K \rightarrow \mathbb{Z}^K  \\
\pmb{S}_{i,j}(\pmb{R}_{i,j}) \in \mathbb{Z}^K
\end{align*}
Thus we obtain a integer vector valued matrix of the same dimension as $\Phi$ where the $(i,j)$th element is the zero $k$-vector if $(\textit{p}_i,\textit{p}_j) \not\in E(\calN)$.
\begin{align*}
S &= [ \textbf{s}_{i,j} ] \in \mathbb{Z}^{K \times N \times N}
\end{align*}

For a specified sequence of times $t_0, t_1, \ldots $ we model the latent variables $\varphi_{i,j,t_i}$ with Gamma distributions and the components of the vector $\pmb{s}_{i,j}$ with Poisson distributions conditional $\varphi_{i,j,t_i}$ as below, where $\xi_k$ is the efficiency of channel $k \in \{1,\ldots, K\}$. Since the Gamma distribution is the conjugate prior for the Poisson distribution the recurrence maintains closed form through periodic updates; and the $\alpha$ and $\beta$ parameters give the intuitive interpretation of $\alpha$ units of communication over $\beta$ time units.
\begin{align*}
\varphi_{i,j}(t_0) &\sim Gamma(\alpha_{i,j,t_0},\beta_{i,j,t_0}) \\
S_{i,j,k,t_i} &\sim Poisson( \lambda_{i,j,k}(t_i))  \\
\lambda_{i,j,k}(t_i) &= \xi_k \varphi_{i,j,t_i} \\
\alpha_{i,j,t_i} &= \delta_{i,j} \alpha_{i,j,t_{i-1}} + \sum_k s_{i,j,k,t_i} \\
\beta_{i,j,t_i} &= \delta_{i,j} \beta_{i,j,t_{i-1}} + \sum_k \xi_{k} \\
\varphi_{i,j,t_i} &\sim Gamma(\alpha_{i,j,t_i},\beta_{i,j,t_i}) \\
\end{align*}
where the k-vector $\pmb{s}_{i,j,t_i}$ is calculated from the observed k-vector of raw data $\pmb{r}_{i,j,t_i}$ and the $\delta_{i,j}$ are fixed discount factors.
\subsection{Measures of Group Threat}
The threat information contained in the network $\calN$ consists of 
\begin{itemize}
\item The individual threat levels $\pmb{\pi}(X_t(\textit{p}_i))$, for $\textit{p}_i \in V(\calN)$, that are based on the data series $\textbf{Y}_t(\textit{p}_i)$; 
\item The group threat level $\pmb{\pi}^C(X_t(C))$, that is based on the data series $\textbf{Y}^C_t$ emanating from the group of individuals $\textit{p}_i \in V(\calN)$ (see Section 5.1 of the main article); and
\item The pairwise communication levels $\varphi_{i,j}$ along edges $e_{i,j} \in E(\calN)$ that comprise the latent communication matrix $\Phi$ for the network $\calN$. Note that $\Phi$ itself is not observable; we can, however, observe the inferred distribution of the elements of $\Phi$ from the assumption that each element follows a $Gamma(\alpha_{i,j},\beta_{i,j})$ distribution and the values of $\alpha_{i,j}$ and $\beta_{i,j}$ at any time $t$. 
\end{itemize}

Real valued functions of this set of information provide summary measures of the overall threat level of the group and can be used to prioritise groups and de-prioritise groups for close monitoring amongst a set of groups within the whole population. Let $m$ be a real valued function over this set of information such that:
\begin{align}
m: \R^{m\times N} \times \R^m \times \R^{2 \times N \times N} &\rightarrow \R \\
m(\pmb{\pi}(X_t(\textit{p}_i)),\pmb{\pi}^{C}(X_t(C)),[ (\alpha_{i,j,t},\beta_{i,j,t}) ]_{e_{i,j}\in E(\calN)} \in \R
\end{align}
Then for each distinct communicating network, $\calN_1, \ldots, \calN_n \subset \calP_t^* $, within the population of persons of interest, the scores $m_1, \ldots, m_n$ can be used to select the $0 \leq k \leq n$ groups to monitor that optimises the authorities utility function given their constraints, see Sections 5.1 and 6.3 of the main article.

%

\section{Independent updating of the edge weight variables}

With the standard first-order Markovianity and output independence assumptions given in Statements 5 and 6 in the main article, the relationship between the matrices $S_t$ and $\Phi_t$ can then be represented by a 2-time-slice dynamic Bayesian network (DBN) \citep{dean1989model} whose graph is shown in Figure \ref{fig:2ts-dbn}. This enables us to estimate the edge weight random variable $\varphi_{ijt}$ using observational data $\textbf{s}_{ijt}$ for each pair \ti \ and \tj \ independently as proved below.
\begin{figure}[h]
    \centering
    \begin{subfigure}[b]{0.45\textwidth}
    \centering
        \includegraphics[trim = 3cm 0cm 3cm 0cm, scale = 0.25 ]{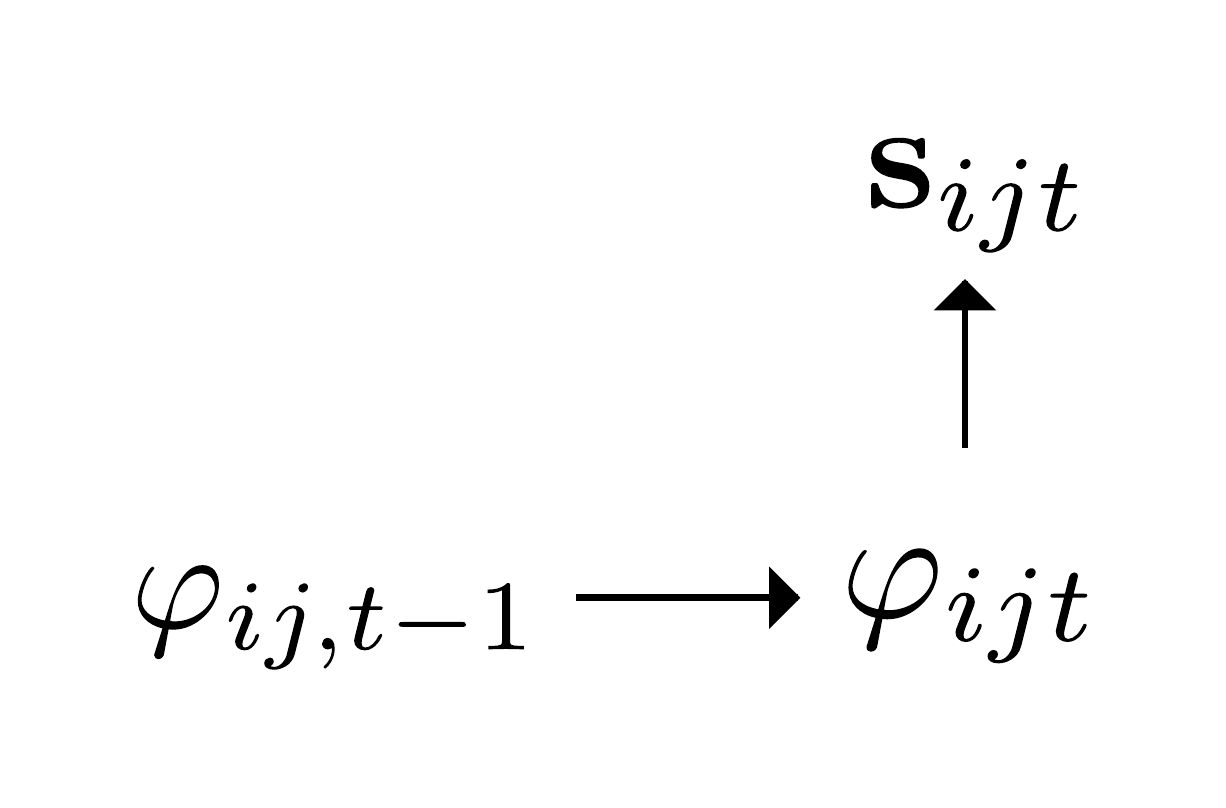}
        \caption{}
    \end{subfigure}
    \begin{subfigure}[b]{0.45\textwidth}
    \centering
        \includegraphics[trim = 3cm 0cm 3cm 0cm, scale = 0.25 ]{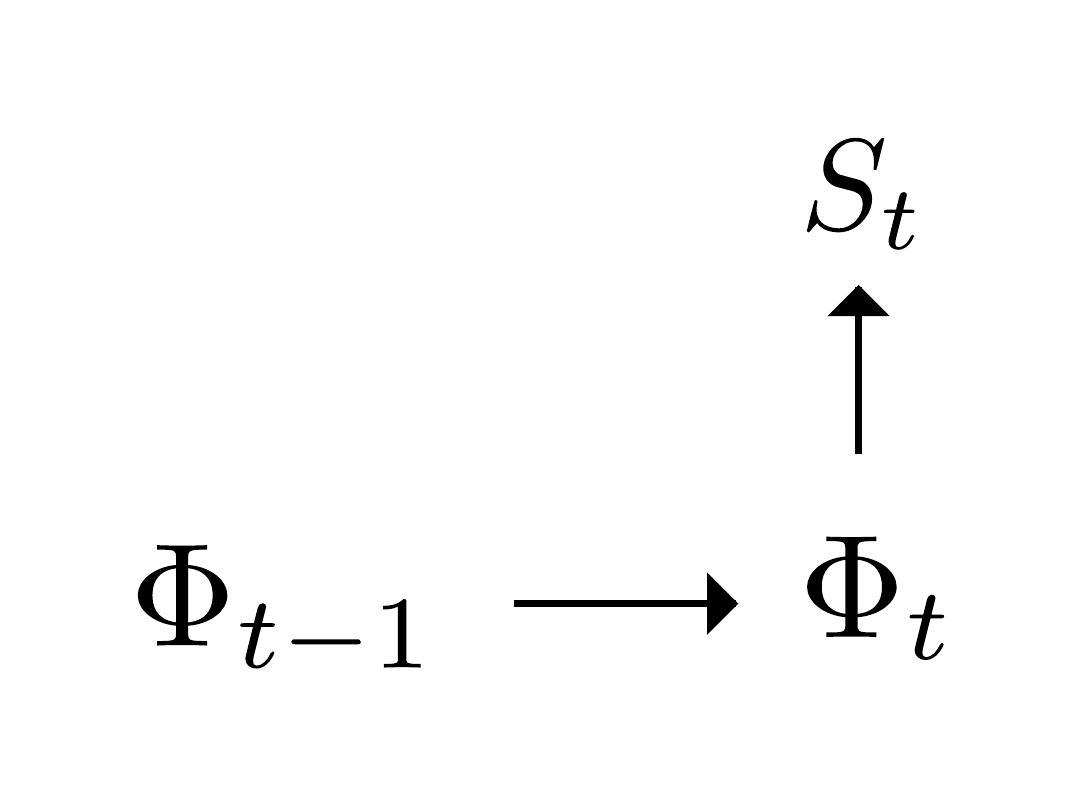}
        \caption{}
    \end{subfigure}
    \caption{Graphical representation of the 2-time-slice DBN: (a) on a univariate level; (b) on a multivariate level (labelled as graph $\mathcal{G}$).}
        \label{fig:2ts-dbn}
\end{figure}

\begin{theorem} 
\label{thm:likelihood_updates}
The marginal likelihood of the 2-time-slice DBN represented by the graph $\mathcal{G}$ decomposes into the product of the one-step-ahead forecasts. Additionally all elements of ${S}_t$ and ${\Phi}_t$ can be updated independently.
\end{theorem}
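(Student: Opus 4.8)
The plan is to reduce both claims to the conditional-independence structure encoded by the graph $\mathcal{G}$ and then invoke Gamma--Poisson conjugacy for the closed-form recurrences. Writing the observation history as $S_{1:T}=(S_{t_1},\dots,S_{t_T})$ and the latent history as $\Phi_{1:T}=(\Phi_{t_1},\dots,\Phi_{t_T})$, I would first write down the joint density $p(\Phi_{1:T},S_{1:T})$ implied by the 2-time-slice DBN. The first-order Markovianity assumption (Statement 5) supplies a transition factor $p(\Phi_t\mid\Phi_{t-1})$ at each slice, while the output-independence assumption (Statement 6) supplies an emission factor $p(S_t\mid\Phi_t)$ depending only on the contemporaneous latent state; the marginal likelihood is then $p(S_{1:T})=\int p(\Phi_{1:T},S_{1:T})\,d\Phi_{1:T}$.

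For the first claim I would apply the chain rule of probability directly to $p(S_{1:T})$, obtaining the telescoping product $p(S_{1:T})=\prod_{t} p(S_t\mid S_{1:t-1})$ with the convention $S_{1:0}=\varnothing$. Each factor $p(S_t\mid S_{1:t-1})$ is by definition the one-step-ahead forecast: it is produced by propagating the filtered distribution of $\Phi_{t-1}$ through the transition kernel and integrating against the emission. Thus this step is essentially an identification of the generic predictive factor with the forecast distribution, and the decomposition into one-step-ahead forecasts follows.

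For the second claim I would exploit that $\mathcal{G}$ contains no edge joining the latent chain indexed by $(i,j)$ to that indexed by a distinct pair $(i',j')$: each edge weight $\varphi_{i,j,t}$ evolves as its own first-order chain, and each observation block $\pmb{s}_{i,j,t}$ has $\varphi_{i,j,t}$ as its sole latent parent. Consequently the joint density factorizes as a product over edges, $p(\Phi_{1:T},S_{1:T})=\prod_{e_{i,j}\in E(\calN)} p(\varphi_{i,j,1:T},\pmb{s}_{i,j,1:T})$. Since the latent chains for distinct edges are mutually independent, integrating out $\Phi_{1:T}$ splits by Fubini into a product of edge-specific integrals, so the marginal likelihood is itself a product over edges and the filtering recursion decouples. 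Within a single edge the channels $k$ are conditionally independent given $\varphi_{i,j,t}$, so their Poisson likelihoods combine into a single sufficient statistic; Gamma--Poisson conjugacy then keeps the filtered distribution in the Gamma family, and combining the discounted prior $Gamma(\delta_{i,j}\alpha_{i,j,t_{i-1}},\delta_{i,j}\beta_{i,j,t_{i-1}})$ with the Poisson likelihood reproduces exactly the stated recurrences for $\alpha_{i,j,t_i}$ and $\beta_{i,j,t_i}$, each one-step-ahead forecast being negative-binomial.

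The hard part will be justifying rigorously that marginalization preserves the edge-wise factorization, i.e.\ that the $\abs{E(\calN)}$-dimensional integral over $\Phi_{1:T}$ genuinely splits into a product of single-chain integrals. This rests on the transition kernel carrying no cross-edge coupling --- precisely the separation that the multivariate graph $\mathcal{G}$ asserts --- together with Fubini and the mutual independence of the chains. A secondary point requiring care is phrasing ``first-order Markovianity'' and ``output independence'' as d-separation statements in $\mathcal{G}$, so that the transition and emission factors provably depend only on the immediately preceding and the contemporaneous latent states; once these are in place, the conjugate update and the negative-binomial forecast are routine.
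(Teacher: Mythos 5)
Your proposal is correct, and it rests on the same two pillars as the paper's proof: the chain rule over time slices identifying each factor with a one-step-ahead forecast, and the absence of cross-edge coupling in $\mathcal{G}$ giving an edge-wise (and, within each edge, channel-wise) factorization. The organization differs, though. The paper argues recursively, slice by slice: it asserts that the prior $p(\Phi_t\,|\,\calF_t)$ factorizes over edges, invokes Statements 5 and 6 to factorize $p(S_t\,|\,\calF_t)$ and the posterior $p(\Phi_t\,|\,S_t,\calF_t)$, applies Bayes' rule per edge with the $K$ channels conditionally independent given $\varphi_{ijt}$, writes each forecast as a single-edge integral, and only then telescopes the marginal likelihood as $\prod_s p(S_s\,|\,\calF_s)$. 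You instead work with the full-trajectory joint $p(\Phi_{1:T},S_{1:T})$, factorize it over edges using the graph structure, and appeal to Fubini to show the marginalization splits into per-edge integrals. This global route is arguably more rigorous on exactly the point the paper glosses over: the paper's Equation for the prior factorization is justified only by the informal remark that each $\varphi_{ijt}$ ``can be estimated independently,'' whereas your Fubini step makes explicit why independence of the latent chains survives marginalization and filtering. Two remarks: first, your argument needs (and should state) that the time-zero priors $\varphi_{i,j}(t_0)$ are independent across edges, since the transition kernel alone does not give mutual independence of the chains; second, the Gamma--Poisson conjugacy and negative-binomial forecasts you include are part of the surrounding model description but are not needed for this theorem, which is purely a statement about conditional-independence-driven decomposition, so you could omit them without loss.
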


\begin{proof}
At time $t \geq 0$, recall that $\mathcal{F}_t$ denotes all past data $S_{t'}$ and edge weight random variables $\Phi_{t'}$ for $t' < t$. Denote by $p(\varphi_{ijt} \,|\, \calF_{t})$ the prior distribution for $\varphi_{ijt}$ given the past information till time $t-1$. Since the edge weight random variable $\varphi_{\cdot, \cdot, t}$ is a quantitative measure of the information being shared between a pair of suspects at time $t$, $\varphi_{ijt}$ for each pair $\{\text{\ti, \tj}\} \in \Omega_t \times \Omega_t$ at time $t$ can be estimated independently. Thus, the prior density of $\Phi_t$ can be written as
\begin{equation}
    p(\Phi_t \,|\, \calF_{t}) = \prod_{e_{i,j} \in E(\calN)} p(\varphi_{ijt}\,|\, \calF_{t}).
    \label{eq:priors}
\end{equation}

With the first-order Markovianity (Statement 5 in the main article) and the output independence assumptions (Statement 6 in the main article), the matrix $S_t$ and the posterior of $\Phi_t$ decompose as follow
\begin{equation}
    p(S_t \,|\, \calF_{t}) = \prod_{e_{i,j} \in E(\calN)} p(\textbf{s}_{ijt} \,|\, \calF_{t}),
\end{equation}
\begin{equation}
    p(\Phi_t \,|\,S_t, \calF_{t}) = \prod_{e_{i,j} \in E(\calN)}  p(\varphi_{ijt} \,|\, \textbf{s}_{ijt}, \calF_{t}).
\end{equation}

For each pair \{\ti, \ \tj\}, on observing a vector $\textbf{s}_{ijt}$ of data from the $K$ communication channels, $\varphi_{ijt}$ can be updated as follows due to Statement 4 in the main article,
\begin{align}
    p(\varphi_{ijt} \,|\, \textbf{s}_{ijt}, \calF_{t}) &\propto p(\varphi_{ijt} \,|\, \calF_{t}) p(\textbf{s}_{ijt} \,|\, \varphi_{ijt}, \calF_{t}) \notag \\
    & = \prod_{k=1}^K  p(\varphi_{ijt} \,|\, \calF_{t})  p(\textbf{s}_{ijkt}\,|\, \varphi_{ijt}, \calF_{t}).
    \label{eq:posterior}
\end{align}

Thus the one-step-ahead forecasts can be written as
\begin{align}
    p(\textbf{s}_{ijt} \,|\, \calF_{t}) &= \int_{\varphi_{ijt}} p(\textbf{s}_{ijt} \,|\, \varphi_{ijt}, \calF_{t}) p(\varphi_{ijt} \,|\, \calF_{t}) \, d\varphi_{ijt} \notag \\
    &= \prod_{k=1}^K \int_{\varphi_{ijt}} p(s_{ijkt} \,|\, \varphi_{ijt}, \calF_{t}) p(\varphi_{ijt} \,|\, \calF_{t}) \, d\varphi_{ijt}
\end{align}

Now the marginal likelihood of the 2-time-slice DBN model described by the graph $\mathcal{G}$ can be decomposed into a product of the one-step-ahead forecasts as follows:
\begin{align}
p(S_{1}, \ldots, S_{t} \,|\, \calF_1) &= \prod_{s=1}^t p(S_{s} \,|\, \calF_{s}) \notag \\
&= \prod_{s=1}^t \prod_{e_{i,j} \in E(\calN)} \prod_{k=1}^K \int_{\varphi_{ijs}} p(s_{ijks} \,|\, \varphi_{ijs}, \calF_{s}) p(\varphi_{ijs} \,|\, \calF_{s}) \, d\varphi_{ijs}
\end{align}
or equivalently as a sum of the log marginal likelihoods as
\begin{align}
\log (p(S_{1}, \ldots, S_{t} \,|\, \calF_1)) = \sum_{s=1}^t \sum_{e_{i,j} \in E(\calN)} \sum_{k=1}^K  \log p(s_{ijks} \,|\, \calF_{s})
\end{align}
\noindent where $\calF_1$ reflects the prior information used to set up the model at time $t_0$. 

\end{proof}

\section{Additional information for the illustrative example}

Figure \ref{fig:observed_data} shows the activity data for each of the four suspects observed over a period of ten weeks. This activity data is used in the individual Lone Terrorist models for the suspects as described in Section 3.1 of the main article and Appendix \ref{Sec:RVE}. 

\begin{figure} [h]
\centering
\begin{subfigure}[t]{0.48\textwidth}
    \includegraphics[width=\textwidth]{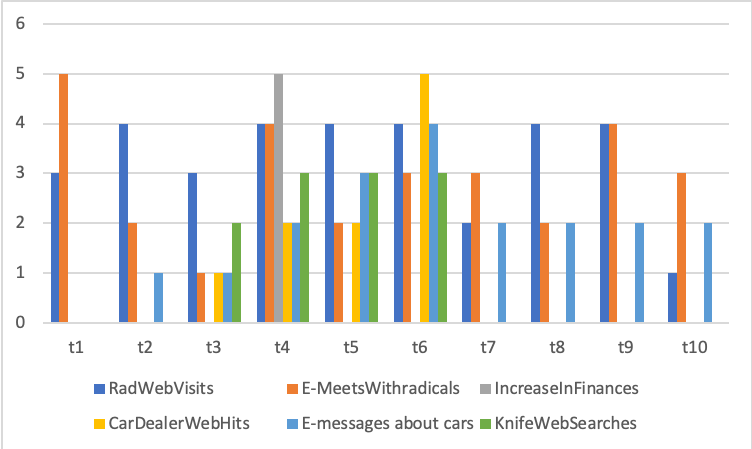}
    \caption{Activity data for \textit{p}$_1$}
\end{subfigure}
\hfill
\begin{subfigure}[t]{0.48\textwidth}
    \includegraphics[width=\textwidth]{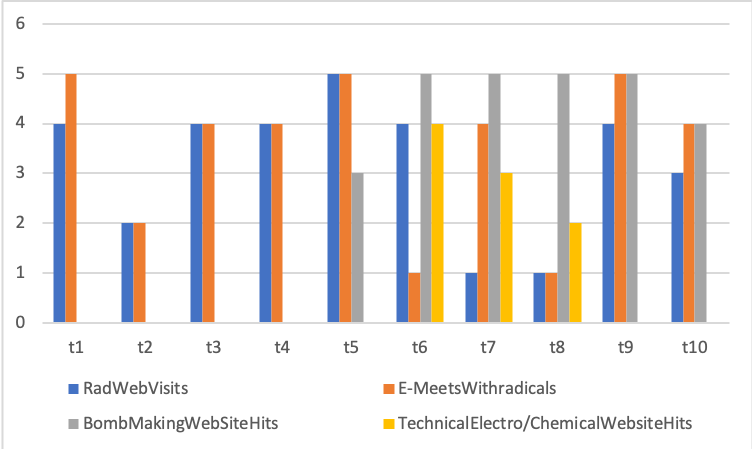}
    \caption{Activity data for \textit{p}$_2$}
\end{subfigure}
\begin{subfigure}[t]{0.48\textwidth}
    \includegraphics[width=\textwidth]{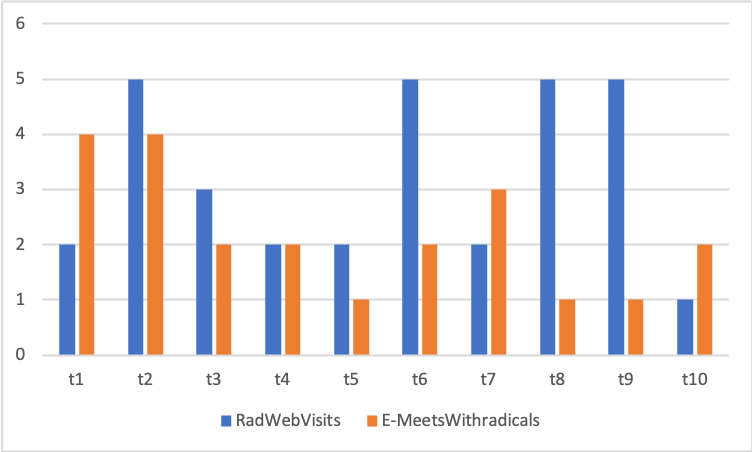}
    \caption{Activity data for \textit{p}$_3$}
\end{subfigure}
\hfill
\begin{subfigure}[t]{0.48\textwidth}
    \includegraphics[width=\textwidth]{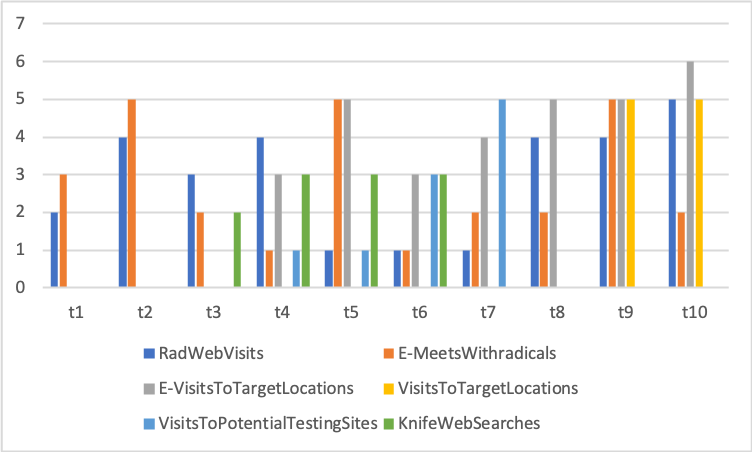}
    \caption{Activity data for \textit{p}$_4$}
\end{subfigure}
\caption{Activity data for the four suspects over the observed time period of ten weeks.}
\label{fig:observed_data}
\end{figure}

The setting of the $\alpha$ and $\beta$ parameters for $\varphi_{i,j,t_k}$ for all pairs over the ten week period are shown in Table \ref{tab:parameters}, and the evolution of $\varphi_{i,j,t_k}$ is shown through the posterior densities in Figure \ref{fig:posterior_densities}. The discount factor $\delta_{i,j,t_k}$is set to 0.7 across all pairs and for the entire ten week duration.

\begin{table}[H]
\centering
\begin{footnotesize}
\begin{tabular}{lllllllllllll}
\hline
{} &   $\varphi_{1,2}$ &  &   $\varphi_{1,3}$ &  &   $\varphi_{1,4}$ &  &   $\varphi_{2,3}$ &  &   $\varphi_{2,4}$ &  &   $\varphi_{3,4}$ &  \\
\hline
&  $\alpha$ &       $\beta$ &  $\alpha$ &       $\beta$ &  $\alpha$ &       $\beta$ &  $\alpha$ &       $\beta$ &  $\alpha$ &       $\beta$ &  $\alpha$ &        $\beta$ \\
\hline
$t_1$ prior  &   0.70 &       1.41 &     &         &     &         &   0.70 &       1.41 &     &         &     &          \\
$t_1$ post   &   0.70 &       2.41 &     &         &     &         &   0.70 &       2.41 &     &         &     &          \\
$t_2$ prior  &   0.50 &       1.70 &     &         &     &         &   0.50 &       1.70 &     &         &     &          \\
$t_2$ post   &   3.50 &       2.70 &     &         &     &         &   1.50 &       2.70 &     &         &     &          \\
$t_3$ prior  &   2.46 &       1.90 &     &         &     &         &   1.05 &       1.90 &     &         &     &          \\
$t_3$ post   &   7.46 &       2.90 &     &         &      2 &          1 &   1.05 &       2.90 &     &         &     &          \\
$t_4$ prior  &   5.26 &       2.04 &     &         &   1.41 &       0.70 &   0.74 &       2.04 &     &         &     &          \\
$t_4$ post   &  10.26 &       3.04 &     &         &   6.41 &       1.70 &   0.74 &       3.04 &     &         &     &          \\
$t_5$ prior  &   7.23 &       2.15 &     &         &   4.52 &       1.20 &   0.52 &       2.15 &     &         &     &          \\
$t_5$ post   &  12.23 &       3.15 &      2 &          1 &   9.52 &       2.20 &   0.52 &       3.15 &      1 &          1 &     &          \\
$t_6$ prior  &   8.62 &       2.22 &   1.41 &       0.70 &   6.71 &       1.55 &   0.37 &       2.22 &   0.70 &       0.70 &     &          \\
$t_6$ post   &  13.62 &       3.22 &   7.41 &       1.70 &  12.71 &       2.55 &   5.37 &       3.22 &   6.70 &       1.70 &      1 &           1 \\
$t_7$ prior  &   9.60 &       2.27 &   5.22 &       1.20 &   8.95 &       1.80 &   3.78 &       2.27 &   4.72 &       1.20 &   0.70 &        0.70 \\
$t_7$ post   &  16.60 &       3.27 &  11.22 &       2.20 &  15.95 &       2.80 &   9.78 &       3.27 &  11.72 &       2.20 &   7.70 &        1.70 \\
$t_8$ prior  &  11.70 &       2.30 &   7.91 &       1.55 &  11.24 &       1.97 &   6.89 &       2.30 &   8.26 &       1.55 &   5.43 &        1.20 \\
$t_8$ post   &  17.70 &       3.30 &  13.91 &       2.55 &  19.24 &       2.97 &  10.89 &       3.30 &  16.26 &       2.55 &  13.43 &        2.20 \\
$t_9$ prior  &  12.47 &       2.33 &   9.80 &       1.80 &  13.56 &       2.09 &   7.68 &       2.33 &  11.46 &       1.80 &   9.46 &        1.55 \\
$t_9$ post   &  19.47 &       3.33 &  16.80 &       2.80 &  22.56 &       3.09 &  14.68 &       3.33 &  20.46 &       2.80 &  18.46 &        2.55 \\
$t_{10}$ prior &  13.72 &       2.34 &  11.84 &       1.97 &  15.90 &       2.18 &  10.34 &       2.34 &  14.42 &       1.97 &  13.01 &        1.80 \\
$t_{10}$ post  &  20.72 &       3.34 &  19.84 &       2.97 &  26.90 &       3.18 &  18.34 &       3.34 &  24.42 &       2.97 &  23.01 &        2.80 \\
\hline
\end{tabular}
\caption{Evolution of the prior and posterior parameters for $\varphi_{i,j,t_k}$ during the ten weeks from $t_1$ to $t_{10}$.}
\label{tab:parameters}
\end{footnotesize}
\end{table}

\begin{figure}[H]
\centering
\begin{subfigure}[t]{0.42\textwidth}
    \includegraphics[width=\textwidth]{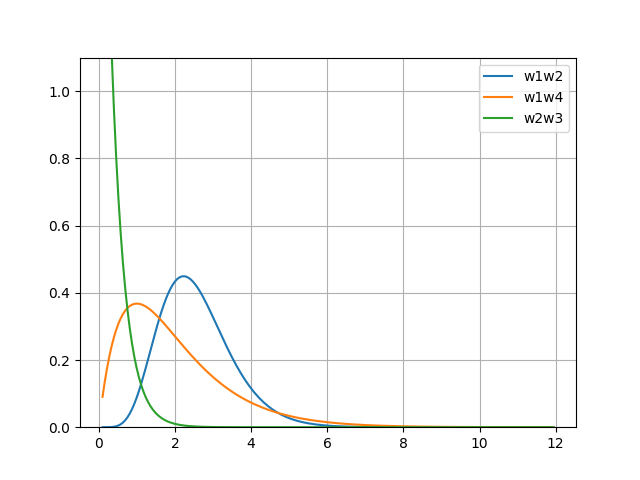}
    \caption{Posterior density at $t_3$}
\end{subfigure}
\begin{subfigure}[t]{0.42\textwidth}
    \includegraphics[width=\textwidth]{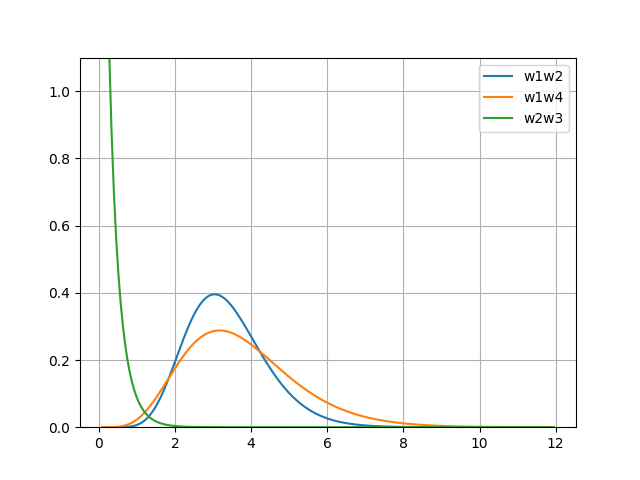}
    \caption{Posterior density at $t_4$}
\end{subfigure}
\vfill
\begin{subfigure}[t]{0.42\textwidth}
    \includegraphics[width=\textwidth]{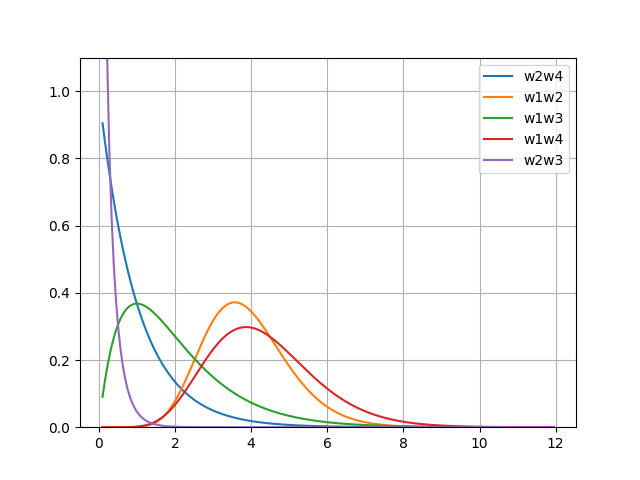}
    \caption{Posterior density at $t_5$}
\end{subfigure}
\begin{subfigure}[t]{0.42\textwidth}
    \includegraphics[width=\textwidth]{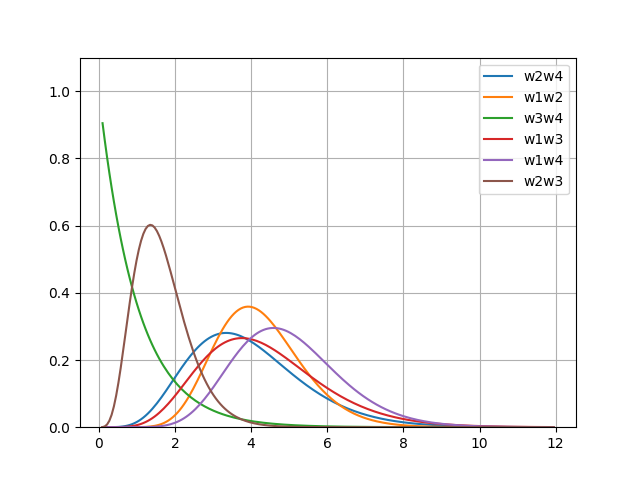}
    \caption{Posterior density at $t_6$}
\end{subfigure}
\vfill
\begin{subfigure}[t]{0.42\textwidth}
    \includegraphics[width=\textwidth]{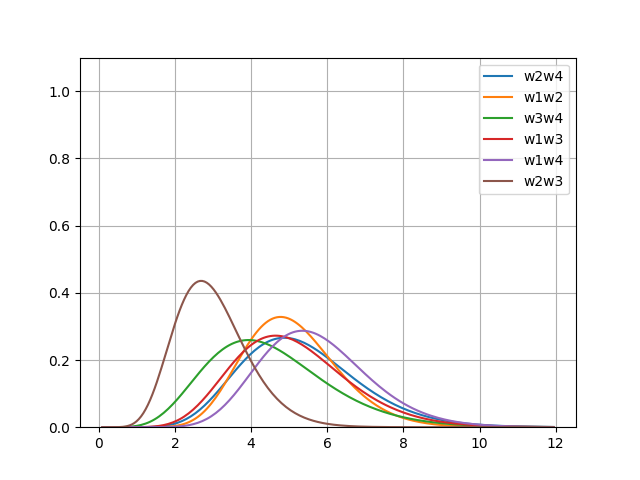}
    \caption{Posterior density at $t_7$}
\end{subfigure}
\begin{subfigure}[t]{0.42\textwidth}
    \includegraphics[width=\textwidth]{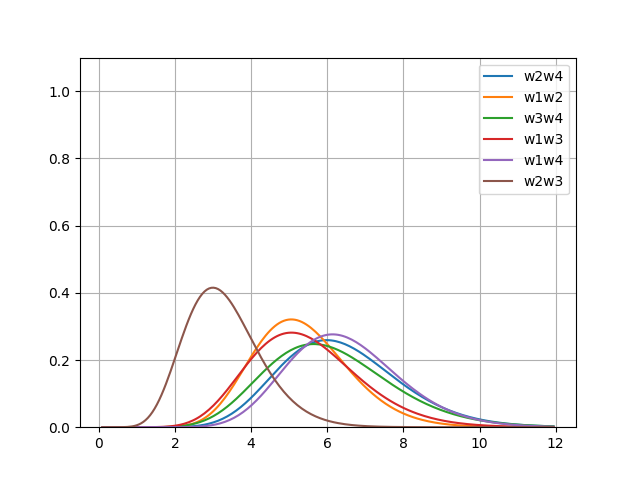}
    \caption{Posterior density at $t_8$}
\end{subfigure}
\vfill
\begin{subfigure}[t]{0.42\textwidth}
    \includegraphics[width=\textwidth]{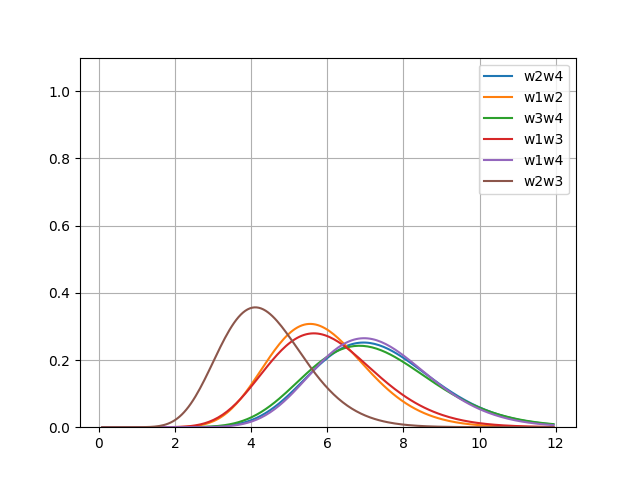}
    \caption{Posterior density at $t_9$}
\end{subfigure}
\begin{subfigure}[t]{0.42\textwidth}
    \includegraphics[width=\textwidth]{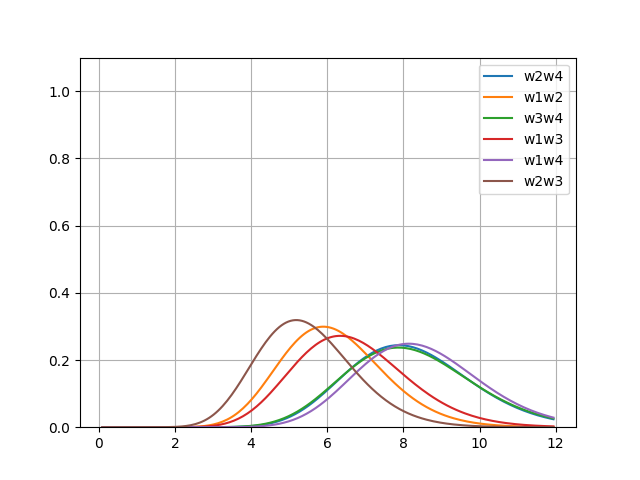}
    \caption{Posterior density at $t_{10}$}
\end{subfigure}

\caption{Evolution of $\varphi_{i,j,t_k}$ from time $t_3$ to $t_{10}$ represented through their posterior densities.}
\label{fig:posterior_densities}
\end{figure}

\printbibliography